\providecommand{\algorithmname}{Algorithm}
\numberwithin{equation}{section}
\numberwithin{figure}{section}
\theoremstyle{plain}
\newtheorem{thm}{\protect\theoremname}
\theoremstyle{definition}
\newtheorem{defn}[thm]{\protect\definitionname}
\theoremstyle{plain}
\newtheorem{cor}[thm]{\protect\corollaryname}
\newenvironment{proof}[1][\protect\proofname]{\par
\normalfont\topsep6\p@\@plus6\p@\relax
\trivlist
\itemindent\parindent
\item[\hskip\labelsep
\scshape
#1]\ignorespaces
}{%
\endtrivlist\@endpefalse
}
\providecommand{\proofname}{Proof}
\theoremstyle{plain}
\newtheorem{lem}[thm]{\protect\lemmaname}
  \providecommand{\corollaryname}{Corollary}
  \providecommand{\definitionname}{Definition}
  \providecommand{\lemmaname}{Lemma}
\providecommand{\theoremname}{Theorem}
\providecommand{\corollaryname}{Corollary}
\providecommand{\definitionname}{Definition}
\providecommand{\lemmaname}{Lemma}
\providecommand{\theoremname}{Theorem}
\title{Localized epidemic detection in networks with overwhelming noise}
\author[1]{Eli A. Meirom} 
\author[2]{Chris Milling} 
\author[2]{Constantine Caramanis} 
\author[1]{Shie Mannor} 
\author[1]{Ariel Orda} 
\author[2]{Sanjay Shakkottai}
\affil[1]{Department of Electrical Engineering, Technion-  Israel Institute of Technology,  Israel } 
\affil[2]{Dept of Electrical and Computer Engineering, The University of Texas at Austin, USA}
\date{}
\begin{document}

\maketitle

\begin{abstract}
We consider the problem of detecting an epidemic in a population where
individual diagnoses are extremely noisy. The motivation for this
problem is the plethora of examples (influenza strains in humans,
or computer viruses in smartphones, etc.) where reliable diagnoses
are scarce, but noisy data plentiful. In flu/phone-viruses, exceedingly
few infected people/phones are professionally diagnosed (only a small
fraction go to a doctor) but less reliable \emph{secondary signatures}
(e.g., people staying home, or greater-than-typical upload activity)
are more readily available.

These secondary data are often plagued by unreliability: many people
with the flu do not stay home, and many people that stay home do not
have the flu. This paper identifies the precise regime where knowledge
of the\emph{ contact network} enables finding the needle in the haystack:
we provide a distributed, efficient and robust algorithm that can
correctly identify the existence of a spreading epidemic from highly
unreliable local data. Our algorithm requires only local-neighbor
knowledge of this graph, and in a broad array of settings that we
describe, succeeds even when false negatives and false positives make
up an overwhelming fraction of the data available. Our results show
it succeeds in the presence of partial information about the contact
network, and also when there is not a single ``patient zero,'' but
rather many (hundreds, in our examples) of initial patient-zeroes,
spread across the graph. 
\end{abstract}
\label{submission}

\section{Introduction}

Any study, analysis or curbing of an epidemic begins with the fundamental
question: is the phenomenon that we experience indeed an epidemic?
The identification of such processes, be they malicious malware spreading
on computer networks, memes on social networks, or trends in public
opinion, is of major interest across several disciplines. 

A common characteristic is a stark absence of reliable information:
patients with flu-like symptoms rarely visit a medical professional.
More typical, however, is the availability of significant amounts
of information indicating a secondary signature of the outbreak: flu
patients may stay home from work, infected computers may exhibit somewhat
encumbered performance, adopters of a new technological trend (e.g.,
smart-watch) may show new usage patterns, or may simply ``self-report.''
While more plentiful, such data may be riddled with false positives
and negatives.

Further complicating the problem is the fact that the underlying network
is rarely fully known and in practice it may be possible to recover
only a very limited, local, subgraph near any infected node. In the
setting of an epidemic, there may be multiple epidemic sources rather
than a single ``patient zero.'' In the early phase of the HIV epidemic,
this was precisely the nature of the data, since ``patient zero''
was unknown, and a ``hidden backbone'' connected the initially identified
patients (\citet{Auerbach1984}).

This paper addresses the algorithmic and statistical challenges that
this problem poses on large scale networks. We consider the basic
robust graph-learning problem in the most dire information-restricted
setting: at some instant in time we are informed that a given subset
of nodes exhibiting unusual behaviour (in the computer setting) or
are sick. Exact reporting times are unaccessible, and moreover we
assume we are unable to observe the time evolution of the sickness
reporting process. Given this single snapshot in time, the statistical
inference problem is to determine if there is an epidemic, spreading
in a diffusive process, which propagates through the network from
one or possibly many initial nodes, or rather nodes have become infected
via an independent, external mechanism.

Algorithmically, we seek an efficient, scalable algorithm with minimal
computational and information requirements. In particular, we assume
each node has some knowledge of its neighbourhood (our simulations
show that this too need only be approximate).

Often, the two processes of random infection and epidemic spread,
exist in parallel. For example, a video may be posted by multiple
members in a social network, and spread across the network by sharing;
technology trends may be spread by mass-media advertisement, as well
as word-of-mouth effects. We discuss this possibility explicitly,
and we ask which proliferation mechanism dominates: are more people
exposed to this media by their friends, or is it more common to watch
it first through an external website.

We describe a class of algorithms for this decision making, or statistical
hypothesis testing problem. Our algorithm requires only local information,
and it is computationally efficient. Furthermore, the algorithm can
be applied in a distributed fashion. Our theoretical results show
that it works even when the fraction of false negatives and positives
goes to 1 -- i.e., even as an overwhelming majority of information
we see is in fact false. Our simulation results corroborate this finding.

\section{\label{sec:Problem-definition,-related}Problem definition, related
work and our contribution}

In this section we describe the basic model we consider, review related
work, and then explain our contributions.

\subsection{The Basic Model}

The problem we consider is an extension of the scenario described
in \citet{Milling2012,Milling2013}. Consider a graph $G=(V,E)$,
where the number of nodes is $N=|V|.$ The decision making task at
hand is distinguishing between the two following alternative scenarios. 
\begin{itemize}
\item \emph{An epidemic}: At time $t=0$, an arbitrary subset of nodes is
infected. The infection then spreads according to a standard \emph{suceptible-infectedmodel}
(\citet{Ganesh2005}). That is, infected nodes infect their neighbors
according to an exponential clock set on each edge. We denote the
set of infected nodes at time $T$ as $S=S(T)$, and the infection
size as $|S|=\alpha(N)N$. At this time, each infected node \emph{reports
it is infected} with probability $q(N)$. We denote the set of truly
reporting nodes by $S_{r}\subseteq S$, and thus call $S\setminus S_{r}$
the \emph{false negatives}. In addition, there are $f|S_{r}|$ nodes,
picked uniformly over the network, that also report infected. The
intersection of these nodes with $S^{c}$, represent false positives.
A reporting process illustration is presented in Fig \ref{fig:reporting illustration}. 
\item \emph{Uniform reporting}: Each node, independently of all others,
reports an infection with some probability. The problem is most interesting
when this probability is chosen so that the expected numbers of reporting
nodes in each scenario match. 
\end{itemize}
Thus in summary, the key parameters that define our setting are as
follows: $N$ is the total number of nodes; $\alpha(N)$ is the fraction
of nodes ultimately infected, denoted by $S$; $q(N)$ is the probability
that a node in $S$ will report, and hence $(1-q)$ is the false negative
rate; $f$ controls the fraction of false positives, which scale as
$f/(1+f)$, and hence goes to $100\%$ as $f\rightarrow\infty$.

Many settings exhibit the above structure. Influenza and allergies
are known to produce similar symptoms -- without a professional diagnosis,
these may often be confused, even more so when the only indication
of either is absence from work or school. Influenza of course is the
epidemic, whereas allergies affect near and distant neighbors independently.
Technology adoption shares similar traits. Word of mouth advertising
spreads like an epidemic, where as mass media advertisement affects
customers independently.

\begin{figure}
\centering{}\includegraphics[width=0.7\columnwidth]{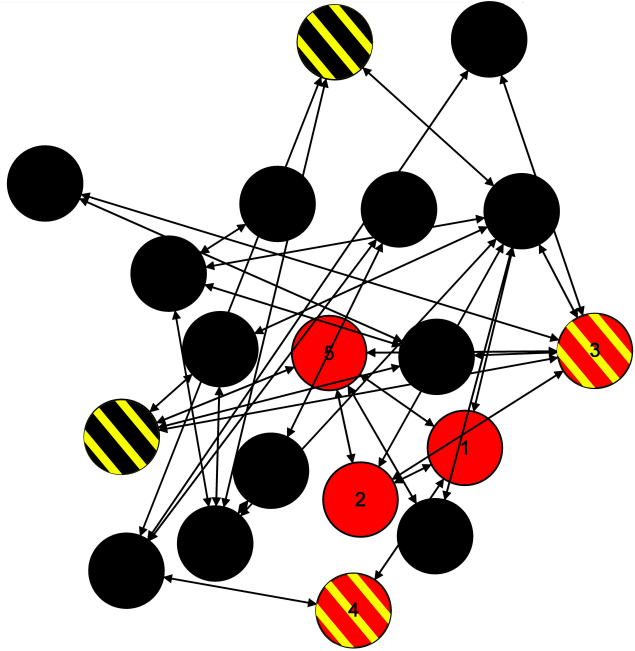}\caption{\label{fig:reporting illustration}An illustration of the epidemic
reporting process. The contagion spans the red nodes, numbered by
the infection order, starting from ``patient zero'', numbered by
one. The infected network fraction is $\alpha=0.25$ . There are two
truly reporting nodes, $\left(q=0.4\right)$ denoted by red and yellow
stripes. In addition, there are two more false positives $\left(f=1\right)$,
in black and yellow stripes.}
\end{figure}

A sample reporting map of the two processes is displayed in Fig.\,\ref{fig:Reporting-map-samples}.
When the reporting probability increases, $q(N)\rightarrow1$, there
exists a large connected component with a ball-like shape about the
set of ``patient-zero''s and the problem is easier. Likewise, the
problem is more difficult when $f(N)\rightarrow\infty$, as the truly
reporting nodes are washed out by the sea of falsely reporting nodes.
We describe a class of algorithms that is shown to converge correctly
even in settings where $q(N)\rightarrow0$ and $f(N)\rightarrow\infty$.

\begin{figure}
\centering{}\includegraphics[width=1\columnwidth]{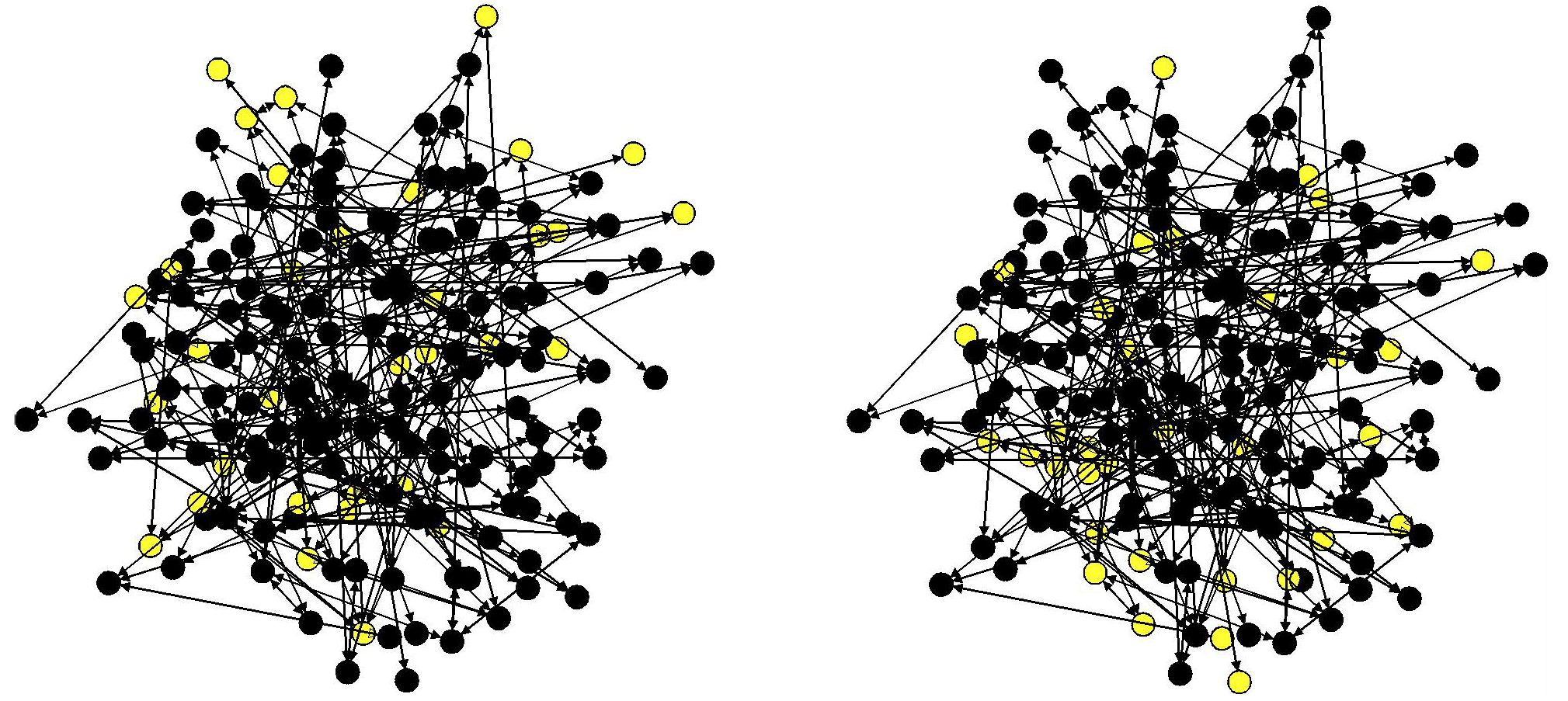}\caption{\label{fig:Reporting-map-samples}Reporting map samples. The reporting
nodes are in yellow. (left) A reporting sample in a uniform reporting
setting. (right) An epidemic reporting scenario map. It is difficult
to distinguish between the two hypotheses visually. }
\end{figure}

\subsection{Prior Work}

The predictive (forward) analysis of our SI model, as in (\citet{Ganesh2005}),
is tightly related to results in First Passage Percolation theory
(\citet{Blair-Stahn2010}). These and related works focus on modeling
the epidemic spread characteristics, for example, estimating the infection
rate for different topologies (e.g., \citet{Gopalan2011}), predicting
the first time the infection exceeds a given boundary, and so on.
These are, therefore, prediction problems, or \emph{forward problems}:
given the initial conditions, predict what happens in the future.

Our work focuses on the inverse question: given (a noisy version of)
what happened, solve the inference problem to decide if the process
is an epidemic or a random illness. Recently, related inference questions
have gained considerable attention. Of particular interest are estimating
the model parameters, such as transmission rate, either by MCMC methods
(\citet{Streftaris}), or a Bayesian approach (\citet{Demiris}).
Another frequently discussed question is the identification of the
epidemic source (e.g., \citet{Shah,Karamchandani2013}).

The work that most closely relates to ours paper is \citet{Milling2012,Milling2013}.
They consider a similar model, and seek to solve the same problem.
The key differences here are that the algorithm and analysis are completely
different, with some important consequences. In those works, the authors
rely on tools from first passage percolation, which in some settings
(for some graph topologies, for initial infection conditions) may
be fragile. For example, it is not clear if the algorithms given there
are able to handle large numbers of initial seeds. Our results in
Section \ref{sec:Experiments} indicate that our algorithm is largely
insensitive to the number of initial infections, and works well for
even hundreds of spread out ``patients zero.'' A second important
difference comes on the \emph{algorithmic front}. Our algorithm is
inherently local. As we describe below, it essentially counts nodes
with many infected neighbors, at a high level bearing similarity to
triangle-counting as a proxy for graph clustering (\citet{Watts1998}).
This allows our algorithm to be run in parallel, and most significantly,
requires only local knowledge of the graph -- something that may be
critical in applications. In Section \ref{sec:Experiments} we compare
the performance of our new algorithm, and find that both on synthetic
(e.g., Erdos-Renyi) graphs as well as real-world networks (e.g., part
of the Facebook graph) our algorithm produces statistically much better
results, and significantly less computational cost. As we explain
below, our algorithm's running time scales with the number of reporting
infected nodes, which may be far smaller than the number of total
nodes.

\subsection{Our Contributions}

There are two factors which control the ``difficulty'' of the hypothesis
testing problem we have posed. First, the more the false negatives
(i.e., the smaller the set $S_{r}$), the more the epidemic statistically
resembles the random illness. Second, in the limit when all infected
nodes report, the presence (or absence) of a large connected component
is easy to test, and sufficient to solve the decision problem. Finally,
the topology of the graph itself plays an important role. The more
connected the graph is, the more the two processes become statistically
identical. In the limiting case where the contact network is the complete
graph, the two processes are indistinguishable.

From the algorithmic standpoint, there is a further regime of interest:
the very small infection setting. It is interesting to understand
in this setting, how well an algorithm using only minimal graph information,
can perform.

Our contributions are statistical and algorithmic, and address both
regimes discussed above. Specifically, the key contributions of this
work are as follows. 
\begin{itemize}
\item Algorithm: We provide a simple distributed algorithm that runs in
time linear in the number of reporting infected nodes, can be easily
parallelized, and requires minimal knowledge of the graph: each infected
node is required to know only information about its local neighborhood. 
\item Statistical Performance: We give sufficient conditions for our algorithm
to correctly identify the infection cause (epidemic or random). In
particular, we give conditions that guarantee that our algorithm succeeds
even when the number of infected nodes is a $O(N)$, i.e., a constant
fraction of all the nodes. We also consider the small-infection regime.
We show here that our algorithm succeeds \emph{even if each node only
knows the graph up to its 1-hop neighbors}. 
\item Experiments: we provide experiments on synthetic (random) graphs,
as well as real-world graphs. We show that our algorithm's performance
is at least as good as the theory predicts. Moreover, its efficiency
allows it to scale easily to very large networks. We compare to state-of-the-art
and demonstrate that our algorithm is both faster and more accurate. 
\end{itemize}

\section{The Algorithm: Hotspot Aggregator}

The intuition behind our algorithm is simple: in an epidemic, having
sick neighbors is more common than in the case of a random illness. 
\begin{defn}
For $B_{i}(l)$ denoting the radius-$l$ ball about node $i$, and
$NN_{i}(K)$ denoting the set of the $K$ nearest neighbors of node
$i$, we define the following indicator random variables:

\[
H_{i}^{NN}(K,s)=\begin{cases}
1 & \left|\{v\in NN_{i}(K)\cap S\}\right|\geq s\\
0 & \text{otherwise}
\end{cases}
\]

\end{defn}
and a similar indicator, $H_{i}^{B}(l,s),$ for the radius-$l$ ball$.$
These indicators are true only if there are enough (more than $s$)
reporting nodes in the immediate neighborhood of $i$. These are equivalent
with appropriate correspondence of $l$ and $K$, and so we use them
interchangeably. If the indicator evaluates to 1, we call node $i$
a \emph{hotspot}. The above intuition suggests that there are more
hotspots in an epidemic than in a random infection. Thus the number
of hotspots, i.e., the sum of these indicators over reporting infected
nodes, suggests a threshold test, that can correctly classify the
infection cause. This is precisely the crux of Hotspot Aggregator
Algorithm, Algorithm \ref{alg:The-hotspot-aggregator}.

\begin{algorithm}
\textbf{Input}: Threshold values $K,\: T.$

\textbf{Output}: An epidemic or a random reporting scenario

\begin{algorithmic} 
\STATE ${\rm counter} \leftarrow 0$  
\FOR {$i=1$ to $N_{\mbox{{\tiny{reporting}}}}$}   
\STATE ${\rm counter} \leftarrow {\rm counter}+H^{NN}_{i}(K,K)$  
\ENDFOR  
\IF{${\rm counter}>T$}  
\STATE \textbf{return} epidemic  
\ELSE \STATE \textbf{return} uniform reporting  
\ENDIF  
\end{algorithmic}

\caption{\label{alg:The-hotspot-aggregator}The hotspot aggregator}
\end{algorithm}

\textbf{Theoretical/Practical Implementation}. The algorithm depends
on the nearest-neighbor threshold value $K$, and the parameter $T$.
The theorems that make up our main results in the next section, demand
a careful choice for $K$ and $T$ that requires approximate knowledge
of the false positive rate and the graph topology. As we detail in
the appendix, for some topologies the values for $K$ and $T$ can
be computed analytically. More generally, the optimal values can be
computed through simulation. On the other hand, our experimental results
in Section \ref{sec:Experiments} suggest that the algorithm's performance
is quite stable. For instance, choosing $K$ corresponding to the
$r$-hop neighborhood in a graph, for $r=2,3$ yields uniformly good
computational results (see Section \ref{sec:Experiments} for more).

\textbf{Algorithm Information and Complexity}. The complete graph
information (its global structure) requires $O\left(N^{2}\right)$
to encode. In contrast, our algorithm requires only the vector $C$,
which contains the number of reporting nodes within each local environment.
The length of $C$ is the number of reporting nodes, denoted by $N_{\mbox{{\tiny{reporting}}}}$,
where for many cases of interest, $N_{\mbox{{\tiny{reporting}}}}$
can be as small as $O(\log(N))$ or even lower. In addition to being
a parsimonious representation of the information required by our algorithm,
the vector $C$ may be reconstructed by knowledge of only the local
neighborhood of the reporting nodes, rather than the whole graph.
As discussed above, such local information is often the only reasonably
accessible/reliable information.

The algorithm's complexity is $o(N_{\mbox{{\tiny{reporting}}}})$
and it scales with the number of reporting nodes (which, again, is
typically vanishingly small compared to the to the number of nodes
in the network). Similarly, the memory requirement is also very small
and therefore this algorithm is easily scalable.

\textbf{Algorithm Performance}. As the noise increases, so does the
difficulty of the decision problem . In the next section we discuss
the convergence properties of the algorithm under extremely noisy
conditions and present specific choices for the parameters $T$ and
$K$. We show that under relaxed topological constraints, the algorithm
converges correctly even when the number of falsely reporting nodes
is $\Theta(N)$, irrespective of the number of truly reporting nodes,
which may be even $\omega(1)$. In an extreme scenario, where the
number of false positives is $\Theta(N)$ greater than the truly reporting
nodes number, Corollary \ref{cor:large number of reporting} shows
the local environment about every reporting nodes should contain $\Theta(\log N)$
nodes. However, if the number of falsely reporting nodes is $o(N)$,
then it possible to apply the algorithm on a local environment which
contains only a finite number of nodes, as shown in Theorem \ref{prop: small epidemics converges}.

\section{\label{sec:Correctness-and-convergence} Main Results: Correctness
\& Convergence}

Before we proceed, we require some preliminary definitions.The infection
boundary is the set of infected nodes that have a non-infected node
in their local neighborhood. An alternative definition for a boundary
is a set of infected nodes for which at least one of the $k$ nearest
neighbors is not infected.
\begin{defn}
The $l$th order boundary $\partial S(l)$ is a subset of infected
nodes, $\partial S(l)\subseteq S$, such that for every $i\in\partial S(l)$
we have $B_{i}(l)\cap S^{c}\neq\emptyset$, i.e., there exists $j\notin S$
in the ball $B_{i}(l)$. Alternatively, the $k$th order border set
is a subset of infected nodes, $DS(k)\subseteq S$, such that for
every $i\in DS(k)$ we have $NN_{i}(k)\cap S^{c}\neq\emptyset$. We
denote by $\gamma(k)=1-\left|DS(k)\right|/|S|$ the fraction of interior
nodes, i.e., nodes for which their local neighborhood is contained
in the infected region.
\end{defn}
As discussed in the introduction, this decision problem's difficulty
increases with the number of false positives and false negatives.
The most challenging setting is when the number of false positives
is $\Theta(N)$, regardless of the number of truly reporting nodes,
which may be $\omega(1).$ We call this the \emph{dense regime}. As
shown in our first theorem, in this regime, if $\gamma(K)$ is non
zero for $K=O(\log N)$, the algorithm converges correctly. Alternatively,
assume the number of truly reporting node is also $\Theta(N)$. If
$\gamma\left(K\in O(1)\right)$ is non zero then the algorithm converges
as well.

The algorithm succeeds when the number of hotspots in the epidemic
and random scenarios differ significantly. The hotspot indicator is
a Bernoulli random variable, hence computing the expectation of their
sum across reporting nodes is straightforward. However, these Bernoulli
random variables are correlated, in proportion to their graph distance
to each other. Thus the key technical portion of the proof involves
controlling the variance of this sum, by appropriately harnessing
large deviation results for sums of correlated random variables.
\begin{thm}
\label{prop:non-vanishing boundary}Assume the number of reporting
nodes, whether truly reporting or falsely reporting, is $\Theta(N)$.
Assume there exist values $K,\gamma$ such that

a) $\Pr\left(\gamma(K)\neq0\right)\rightarrow1$ for $N\rightarrow\infty$

b) \textup{$K=\left\lceil \log(\gamma^{-1}\left(f+1\right))\right\rceil .$}

Then, the hotspot aggregator algorithm with parameters $K$ and 
\begin{eqnarray*}
T & = & \frac{N_{\mbox{{\tiny{reporting}}}}}{2}\left(\frac{\gamma p_{in}^{K}}{\left(f+1\right)}+p^{K}\right)
\end{eqnarray*}
classifies correctly with high probability. The type I error and type
II error decay rates are $o\left(\exp(-cN_{\mbox{{\tiny{reporting}}}})\right)$,
where $c$ is a constant that depends on the problem parameters. 
\end{thm}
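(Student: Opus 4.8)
The plan is the classical recipe for analysing a threshold test: compute the expectation of the counter under each hypothesis, show that the threshold $T$ lies strictly between the two expectations with a constant relative margin, and then show the counter concentrates around its expectation with an exponential rate. Write $Z=\sum_{i:\,i\text{ reports}}H_i^{NN}(K,K)$ for the final value of the counter, and $Z_E,Z_U$ for its law under the epidemic and under uniform reporting, respectively. Almost all of the work is in the concentration step, because the indicators $H_i^{NN}(K,K)$ are correlated.

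\textbf{Expectations and separation.} Under uniform reporting each node reports independently with probability $p$, chosen so that $pN$ equals the expected number of reports, $(1+f)q\alpha N$; in the regime of interest $p<p_{in}$, the interior of the epidemic being denser in reports than the ambient noise. By linearity, $\mathbb{E}Z_U\le N_{\mbox{{\tiny{reporting}}}}p^{K}(1+o(1))$. For the epidemic it is enough to lower-bound $\mathbb{E}Z_E$ by the contribution of the \emph{interior} reporting nodes $i\in S\setminus DS(K)$: for such an $i$ all $K$ nearest neighbours lie in $S$ and report independently with probability $p_{in}$, so $\mathbb{E}[H_i^{NN}(K,K)\mid i\text{ reports}]=p_{in}^{K}$; by hypothesis (a), with probability $\to1$ there are at least $\gamma|S|$ interior nodes, of which a $q$-fraction report, so on that event $\mathbb{E}Z_E\ge\tfrac{\gamma p_{in}^{K}}{1+f}N_{\mbox{{\tiny{reporting}}}}(1-o(1))$ (the remaining reporting nodes only increase $Z_E$). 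Since $T=\tfrac{1}{2}(\tfrac{\gamma p_{in}^{K}}{1+f}+p^{K})N_{\mbox{{\tiny{reporting}}}}$ is the midpoint of these two quantities, the separation reduces to $p^{K}$ being a fixed fraction (say at most $\tfrac{1}{2}$) of $\tfrac{\gamma p_{in}^{K}}{1+f}$; this is exactly what hypothesis (b) secures, since $K=\lceil\log(\gamma^{-1}(1+f))\rceil$ makes $\tfrac{\gamma}{1+f}$ of order $b^{-K}$ for the base $b$ of the logarithm, so that $(p/p_{in})^{K}\le\tfrac{1}{2}b^{-K}$ as soon as $b<p_{in}/p$. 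Hence $\mathbb{E}Z_U\le(1-\delta)T$ and $\mathbb{E}Z_E\ge(1+\delta)T$ for a constant $\delta>0$ depending only on $f,q,\alpha,\gamma$.

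\textbf{Concentration.} Condition on the graph $G$, on the infected set $S$, and --- using hypothesis (a), at a cost of $o(1)$ --- on the event $\{\gamma(K)\ge\gamma\}$; also condition on $N_{\mbox{{\tiny{reporting}}}}$ and on $|S_r|$, which are tightly concentrated about their means. For the type~II error it suffices to control from below the sub-sum $Z_{\mathrm{int}}$ of $H_i^{NN}(K,K)$ over interior reporting nodes, since $Z_E\ge Z_{\mathrm{int}}$ and, by the previous paragraph, $\mathbb{E}Z_{\mathrm{int}}$ exceeds $T$ by a constant factor; note that $Z_{\mathrm{int}}$, like $Z_U$, does not involve the false positives at all. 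Both $Z_{\mathrm{int}}$ and $Z_U$ are thus sums of bounded indicators that are functions of independent report coins, with \emph{purely local} dependence: $H_i^{NN}(K,K)$ and $H_j^{NN}(K,K)$ are independent once $(\{i\}\cup NN_i(K))\cap(\{j\}\cup NN_j(K))=\emptyset$. I would exploit this by properly colouring the ``conflict graph'' on reporting nodes (an edge whenever those two sets meet): within each colour class the relevant indicators depend on disjoint coin-sets, hence are independent, so a Chernoff bound controls each colour-class sum and a union bound over the $O(\Delta)$ colours gives $\Pr(|Z_U-\mathbb{E}Z_U|>\delta T)\le\exp(-cN_{\mbox{{\tiny{reporting}}}})$ and $\Pr(Z_{\mathrm{int}}<\mathbb{E}Z_{\mathrm{int}}-\delta T)\le\exp(-cN_{\mbox{{\tiny{reporting}}}})$, where $\Delta=\max_i|\{j:(\{i\}\cup NN_i(K))\cap(\{j\}\cup NN_j(K))\neq\emptyset\}|$ and $c>0$ depends on $f,q,\alpha,\gamma,K$ and on $\Delta$; alternatively one may invoke a Janson/Suen-type large-deviation bound for sums of indicators with a prescribed dependency graph. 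Combined with the separation, the two error probabilities $\Pr(Z_U>T)$ (declaring an epidemic under uniform reporting) and $\Pr(Z_E\le T)$ (declaring uniform reporting under an epidemic) are each at most $\exp(-cN_{\mbox{{\tiny{reporting}}}})$ plus the $o(1)$ conditioning slack, hence $o(\exp(-c'N_{\mbox{{\tiny{reporting}}}}))$ for any $c'<c$.

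\textbf{Main obstacle.} The delicate point is the locality estimate behind the colouring: one must bound $\Delta$ --- the number of reporting nodes whose $K$-nearest-neighbour sets can meet a fixed one --- in terms of the growth of $G$, and keep $\Delta$ (together with $p_{in}^{-K}$) small enough that the resulting exponent stays of order $N_{\mbox{{\tiny{reporting}}}}$ up to the parameter-dependent constant. This is precisely where an overly connected contact network breaks the argument --- on the complete graph $\Delta=\Theta(N)$, mirroring the fact that the two hypotheses are then indistinguishable --- and it is also where the auxiliary fluctuations (of $\gamma(K)$, of $N_{\mbox{{\tiny{reporting}}}}$ and $|S_r|$, and of the uniformly placed false positives) must be absorbed before the clean independent-coin picture above applies.
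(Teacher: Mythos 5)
Your proposal matches the paper's proof essentially step for step: the same lower bound on the epidemic expectation via the interior (non\mbox{-}boundary) reporting nodes with the $1/(f+1)$ factor for the truly reporting fraction, the same midpoint threshold, and concentration via a dependency graph on the hotspot indicators --- the paper invokes Janson's Corollary 2.6 directly, which is exactly the colouring-plus-Chernoff argument you sketch, and bounds the conflict-graph degree by $K^{2}$ since each local environment contains $K$ nodes. The delicate points you flag (bounding $\Delta$, absorbing the fluctuations of $N_{\mbox{{\tiny{reporting}}}}$ and of the truly reporting count, and checking that the separation survives as $q\rightarrow0$ and $f\rightarrow\infty$) are precisely the ones the paper defers to its appendix lemmas.
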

As an immediate corollary of this result, we see that in the large-infection-regime,
our algorithm succeeds even as the numbers of false negatives and
positives are overwhelmingly greater than the number of truly infected
reporting nodes. The proof follows immediately from the theorem, and
the details are deferred to the appendix.
\begin{cor}
\label{cor:large number of reporting} For a large infection with
$\Theta(N)$ infected nodes (i.e., $\alpha=\mbox{\ensuremath{\Theta}(1) }),$
the algorithm converges correctly if conditions (a) and (b) above
are satisfied, even when the reporting probability goes to zero and
noise ratio goes to infinity, i.e., $q\in\omega\left(N^{-1}\right)$
and $f\in\omega\left(q^{-1}\right)$. 
\end{cor}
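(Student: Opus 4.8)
The plan is to read Corollary~\ref{cor:large number of reporting} off from Theorem~\ref{prop:non-vanishing boundary}: hypotheses (a) and (b) of the theorem are assumed verbatim in the corollary, so the only thing left to check is the theorem's standing assumption that the total number of reporting nodes is $\Theta(N)$. Once that is in place, the theorem's conclusion — correct classification, with both type~I and type~II error probabilities $o(\exp(-cN_{\mbox{{\tiny{reporting}}}}))$ — applies directly, and since $N_{\mbox{{\tiny{reporting}}}}=\Theta(N)\to\infty$ these error probabilities tend to $0$.

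First I would control the truly-reporting set $S_{r}$. Since $\alpha=\Theta(1)$ we have $|S|=\alpha N=\Theta(N)$, and because each infected node reports independently with probability $q$, the count $|S_{r}|$ is a sum of $\alpha N$ independent $\{0,1\}$-variables of mean $q$, so $\mathbb{E}|S_{r}|=q\alpha N$. The hypothesis $q\in\omega(N^{-1})$ forces $qN\to\infty$, hence $q\alpha N\to\infty$, and a multiplicative Chernoff bound then gives $|S_{r}|=(1+o(1))q\alpha N=\Theta(q\alpha N)$, with failure probability at most $\exp(-\Omega(q\alpha N))$; in particular $|S_{r}|\to\infty$ on this event.

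Next I would add the false positives and conclude $N_{\mbox{{\tiny{reporting}}}}=\Theta(N)$. On the event above the number of (nominal) false-positive draws is $f|S_{r}|$, and since $f\in\omega(q^{-1})$ and $|S_{r}|=\Theta(q\alpha N)$ this is $\omega(\alpha N)=\omega(N)$. For all large $N$, these $\omega(N)$ uniformly chosen draws hit all but an $o(1)$ fraction of the $N$ vertices, so $N_{\mbox{{\tiny{reporting}}}}=(1-o(1))N=\Theta(N)$. Equivalently, the \emph{effective} false-positive ratio is $\widehat{f}:=N_{\mbox{{\tiny{reporting}}}}/|S_{r}|-1=\Theta(1/q)$, which is $o(N)$ exactly because $q\in\omega(N^{-1})$. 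Plugging $\widehat{f}$ into the parameter formulas of Theorem~\ref{prop:non-vanishing boundary}, and using that $\gamma(K)\neq0$ forces $\gamma\ge|S|^{-1}=\Omega(N^{-1})$, one gets $K=\lceil\log(\gamma^{-1}(\widehat{f}+1))\rceil=O(\log N)$ — the $\Theta(\log N)$-sized local neighbourhood anticipated in the discussion preceding the theorem — together with a corresponding threshold $T$.

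Finally, applying Theorem~\ref{prop:non-vanishing boundary} on the high-probability event and bounding the overall misclassification probability by a union bound yields at most $\exp(-\Omega(q\alpha N))+o(\exp(-c'N))=o(1)$, which is the claim. The main obstacle here is not analytic — the delicate control of the variance of the correlated hotspot indicators is already carried out inside Theorem~\ref{prop:non-vanishing boundary} — but rather the bookkeeping in the extreme-noise regime $f\in\omega(q^{-1})$: one must consistently interpret both the model's false-positive set and the algorithm's parameters $K,T$ through the saturated ratio $\widehat{f}$, and verify that $\widehat{f}=o(N)$, so that $K=O(\log N)$ and the hypotheses of the theorem remain non-vacuous.
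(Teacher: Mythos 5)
Your proposal is correct and follows essentially the same route as the paper: the paper's own proof (in Appendix A) likewise reduces the corollary to Theorem \ref{prop:non-vanishing boundary} by noting that the effective noise ratio is bounded ($f<N$), that $\gamma(K)\neq 0$ forces $\gamma\geq 1/(\alpha N)$, and hence that $K=O(\log N)$, before substituting these values into the theorem. Your version is somewhat more explicit --- the Chernoff control of $|S_{r}|$, the verification that $N_{\mbox{{\tiny{reporting}}}}=\Theta(N)$ via the saturated false-positive set, and the bookkeeping with the effective ratio $\widehat{f}$ are steps the paper leaves implicit --- but the underlying argument is the same.
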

In particular, if the number of truly reporting nodes is $\Theta(N)$,
then the algorithm converges correctly if $\gamma(K=const)>0$. Alternatively,
the algorithm converges correctly for every network such that $\gamma(K=2\log N)>0$.
This holds, for example, for grids and tree like networks, even if
the noise is $f\in\Theta(N)$.
\begin{proof}
Denote the epidemic (uniform reporting) indicator as $\mathcal{I}$
(respectively, $\tilde{\mathcal{I}}$ ), the reporting probability
of an \emph{infected} node in the epidemic scenario by $p_{in}$.and
the node reporting probability in the uniform reporting scenario as
$p$. In the uniform reporting scenario, the probability that a local
hotspot indicator is true is 
\begin{eqnarray*}
\Pr\left(|\{v\in NN_{i}(k)\cap S_{\mbox{{\tiny{reporting}}}}\}|\geq k\right) & = & p^{k}.
\end{eqnarray*}

Therefore, the expected number of hotspots in the uniform reporting
case $\sum_{i\in V}H_{i}|\tilde{\mathcal{I}}$ is 
\[
\mathbb{E}\left(\sum_{i\in V}H_{i}|\tilde{\mathcal{I}}\right)=\sum_{i\in V}\mathbb{E}\left(H_{i}|\mathcal{\tilde{\mathcal{I}}}\right)=N_{\mbox{{\tiny{reporting}}}}p^{k}.
\]

In the epidemic setting, the hotspot number is greater than the interior
(non-boundary) hotspot number. Therefore, 
\[
\mathbb{E}\left(\sum_{i\in V}H_{i}|\mathcal{I}\right)\geq\mathbb{E}\left(\sum_{i\in S,i\notin DS}H_{i}|\mathcal{I}\right)\geq\frac{\gamma N_{\mbox{{\tiny{reporting}}}}}{\left(f+1\right)}p_{in}^{k}
\]

where we used the fact that the expected number of truly reporting
nodes is $N_{\mbox{{\tiny{reporting}}}}/\left(f+1\right)$ and applied
Lemma \ref{lem:binmoial sum} (stated in the appendix).

Set $T=N_{\mbox{{\tiny{reporting}}}}\left(\frac{\gamma p_{in}^{k}}{\left(f+1\right)}+p^{k}\right)/2$.
Applying Lemma \ref{lem: inside outside random probabilities}, choose
\begin{equation}
K\geq\log(\gamma^{-1}\left(f+1\right))\geq\log(\gamma^{-1}\left(f+1\right))p/p_{in}.\label{eq:K}
\end{equation}

Now the key step is to evaluate the error rates, despite the correlation.
To do this, we use a Hoeffding-like large deviation theorem (see Corollary
2.6 in \citet{Janson2004}) for graph-structured correlation decay
patterns. We can form a dependence graph $\Gamma$ among these $N_{\mbox{{\tiny{reporting}}}}$
random variables, drawing an edge between any two non-independent
random variables. Note, then, that for each two nodes, $i$ and $j$,
the hotspot indicators $H_{i}$ and $H_{j}$ are independent iff the
corresponding environments $B_{i}$ and $B_{j}$, or $NN_{i}$ and
$NN_{j}$ are disjoint. The number of nodes in each local environment
of the $k$ nearest neighbors of $i$ is also $k$, so the number
of nodes with disjoint local environments is at least $N-k^{2}$.
In other words, the maximal node degree in $\Gamma$ is $k^{2}$.

Set $P=p^{K},P_{in}=\gamma p_{in}^{k}/(f+1)$. Then, adapting some
concentration inequalities from \citet{Janson2004}, we can show that
the type I error, $E_{I}:=\Pr\left(\sum_{i\in V}H_{i}|\tilde{\mathcal{I}}\geq T\right)$
is bounded by 
\[
E_{I}\leq\exp\left(-N_{\mbox{{\tiny{reporting}}}}\frac{\left(P_{in}-P\right)^{2}}{16\left(K^{2}+1\right)\left(P+\left(P_{in}-P\right)/6\right)}\right)
\]

In the epidemic scenario, the total number of hotspots is at least
as great as the number of interior hotspots. Therefore, the type II
error is bounded by 
\[
E_{II}=\Pr\left(\sum_{i\in V}H_{i}|\mathcal{I}<T\right)\leq\Pr\left(\sum_{i\in S,i\notin DS}H_{i}|\mathcal{I}<T\right)
\]

and similarly, the latter quantity is bounded by 
\[
\exp\left(-N_{\mbox{{\tiny{reporting}}}}\frac{\left(P_{in}-P\right)^{2}}{16\left(K^{2}+1\right)P_{in}}\right)
\]

As shown in the appendix, these errors tend to zero under the conditions
of this theorem.
\end{proof}
\textbf{Applying Theorem \ref{prop:non-vanishing boundary} and Corollary\ref{cor:large number of reporting}}.
As with the algorithm, applying the theorem or \textbf{corollary}
requires calculating values of $\gamma(K)$ and $K$. As we show in
the appendix, $\gamma(K)$ can often be explicitly (analytically)
computed, and if that is not available, it can be easily computed
numerically. In the appendix, we also present explicit instructions
on the application of the hotspot algorithm for general networks,
including specific finite size networks for which the statistical
ensemble is unknown. When $\gamma(K)$ is known, one can simply find
the minimal value of $K$ such 
\[
K\geq\log\left(\gamma(K)\right)+\log\left(f+1\right),
\]
and substitute the corresponding value in the threshold. For example,
for $d$-dimensional grids one can choose $K=\left\lceil \log(f+1)\right\rceil $
and 
\[
T=\frac{N_{\mbox{{\tiny{reporting}}}}}{2}\left(\frac{p_{in}^{K}}{\left(f+1\right)}+p^{K}\right),
\]
while for a tree like network, such as an Erdos-Renyi network, choose
$K$ such that $K\geq\log K+\left\lceil \log(f+1)\right\rceil $ and
\[
T=\frac{N_{\mbox{{\tiny{reporting}}}}}{2}\left(\frac{p_{in}^{K}}{K\left(f+1\right)}+p^{K}\right).
\]

If the function $\gamma(K)$ is not known or if it difficult to describe
analytically and solve for $K,$ one can apply the bound $\gamma(K)\geq1/\alpha N$
in eq. \ref{eq:K} and obtain a corresponding value for the threshold
$T$.

As long as the probability to find a node for which the local environment
is contained in the infected region is non-zero, the algorithm converges
correctly. This is a very lenient requirement, which only fails for
nearly full-mesh graphs, such that with high probability, for \emph{every}
ball, either there are less than $\log N$ nodes, $|B_{i}(l)|\leq\log N$,
or the number of nodes in the ball is infinitely higher than $\log N$,
$\log N/|B_{i}(l)|\rightarrow0$ . Indeed, an alternative description
for uniform reporting is a contagion on an underlining full mesh grid,
and in this case the two processes are identical.

\subsection*{Small Infections}

We now consider the setting of very small infections, and prove that
in these settings, one can take $K$ to be a constant ($K\in\Theta(1)$),
and moreover can choose this constant so that even the presence of
a single hotspot indicates an epidemic infection. 
\begin{thm}
\label{prop: small epidemics converges} Assume the total number of
reporting nodes, is $N^{1-\beta}$, while the number of truthfully
reporting nodes is $N^{\rho}$ and the truly reporting probability
is $q(N)=N^{-\mu}$. Set $K=\left\lceil 1/\beta\right\rceil -1$ .
If $\gamma(K)\in\omega\left(\log^{-1}(N)\right),$ then the hotspot
aggregator algorithm with parameters $K$ and threshold $T=1$ classifies
correctly with high probability if $K\mu\leq\rho.$ In particular,
if $1>\beta>0.5$, than the hotspot algorithm with $K=1$ classifies
correctly under the same conditions.
\end{thm}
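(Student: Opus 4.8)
The plan is to bound the two error probabilities separately, exactly as in the proof of Theorem~\ref{prop:non-vanishing boundary}, but now exploiting the fact that the threshold is the minimal possible value $T=1$. First I would treat the \textbf{type~I error} (false alarm in the uniform-reporting scenario). Here $E_{I}=\Pr\left(\sum_{i}H_{i}\mid\tilde{\mathcal{I}}\geq 1\right)$, i.e. the probability that \emph{any} reporting node sees all $K$ of its nearest neighbors reporting. By a union bound over the $N_{\mbox{{\tiny{reporting}}}}=N^{1-\beta}$ reporting nodes and the single-node hotspot probability $p^{K}$ computed in the previous proof, $E_{I}\leq N^{1-\beta}p^{K}$. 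Since in the uniform scenario $p=\Theta\!\left(N_{\mbox{{\tiny{reporting}}}}/N\right)=\Theta(N^{-\beta})$, we get $E_{I}=O\!\left(N^{1-\beta-\beta K}\right)$, which $\to 0$ precisely when $K>(1-\beta)/\beta=1/\beta-1$, i.e. $K=\lceil 1/\beta\rceil-1$ suffices (with the usual care at the boundary when $1/\beta$ is an integer, where one may need the $\gamma(K)\in\omega(\log^{-1}N)$ slack or a slightly sharper counting bound). This already explains the stated choice of $K$ and the special case $K=1$ when $1>\beta>0.5$.

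Next I would handle the \textbf{type~II error} (missing a genuine epidemic), $E_{II}=\Pr\left(\sum_{i}H_{i}\mid\mathcal{I}<1\right)=\Pr(\text{no hotspot at all})$. As in the earlier proof, restrict attention to the $\gamma(K)|S|$ interior reporting nodes, whose $K$-neighborhoods lie entirely inside $S$; such a node is a hotspot iff all $K$ of those neighbors actually report, which happens with probability $p_{in}^{K}=q(N)^{K}=N^{-K\mu}$ (conditioning appropriately on $S$ and using that the number of truly reporting interior nodes concentrates around $\gamma(K)N^{\rho}$ up to the $\gamma$ factor — here one needs $\gamma(K)\in\omega(\log^{-1}N)$ so this count is still growing). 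The expected number of interior hotspots is then of order $\gamma(K)N^{\rho}\cdot N^{-K\mu}=\gamma(K)N^{\rho-K\mu}$, which tends to infinity exactly under the hypothesis $K\mu\leq\rho$ (strict inequality gives it immediately; equality is again covered by the $\omega(\log^{-1}N)$ factor). To convert "expectation $\to\infty$" into "$\Pr(\text{sum}=0)\to 0$" I would invoke the same correlated-Hoeffding/Janson concentration bound (Corollary~2.6 in \citet{Janson2004}) used before, with dependence-graph maximum degree $K^{2}$: with $P_{in}$ the per-node hotspot probability and $n$ the number of interior reporting nodes, the lower-tail bound gives $\Pr\left(\sum H_{i}<1\right)\leq\exp\!\left(-\Omega\!\left(nP_{in}/(K^{2}+1)\right)\right)$, and $nP_{in}\to\infty$ with $K$ constant finishes it.

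The \textbf{main obstacle} is the boundary/edge behavior: the clean calculations above all have a strict-inequality version that works immediately, and the theorem is stated for the borderline cases ($1/\beta$ integral, $K\mu=\rho$) where the naive first-moment estimate only gives a bounded, not growing, expected hotspot count, or an $E_{I}$ bound that is $O(1)$ rather than $o(1)$. This is precisely where the hypothesis $\gamma(K)\in\omega(\log^{-1}N)$ earns its keep — it provides a polylogarithmic cushion that, fed through the Janson concentration inequality, is enough to push both errors to zero. So the delicate step is to carry the $\gamma(K)$ and $\log N$ factors carefully through both tail bounds rather than absorbing them into $O(\cdot)$ too early; the rest is the now-routine combination of a union bound (type~I) and a correlated second-moment/Janson argument (type~II). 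I would also double-check that with $T=1$ the algorithm's output is monotone in the right direction — a single hotspot forces the "epidemic" answer — so that these two bounds indeed control the total error, and then state that the details are deferred to the appendix.
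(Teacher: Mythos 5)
Your proposal matches the paper's proof in all essentials: the type~II bound is handled exactly as in the paper (restrict to the $\gamma(K)N^{\rho}$ interior truly-reporting nodes, whose expected hotspot count of order $\gamma(K)N^{\rho-K\mu}$ must diverge, then apply Corollary~2.6 of \citet{Janson2004} with dependence degree $K^{2}$), and your union bound $E_{I}\leq N_{\mbox{{\tiny{reporting}}}}p^{K}$ is just the complementary packaging of the paper's lower bound $\Pr(\mbox{no hotspot})\geq(1-p^{K})^{N_{\mbox{{\tiny{reporting}}}}}\rightarrow1$ --- both reduce to the identical condition $N_{\mbox{{\tiny{reporting}}}}p^{K}\rightarrow0$, i.e.\ $K>1/\beta-1$. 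The boundary-case looseness you flag ($1/\beta$ integral, $K\mu=\rho$) is present in the paper's own argument as well, which simply states the operative conditions $p^{K}\in O(N^{-1})$ and $N'_{\mbox{{\tiny{reporting}}}}P\rightarrow\infty$ without resolving those edge cases.
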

In this last case, the algorithm simply counts the expected number
of infected neighboring pairs.
\begin{proof}
For each two nodes, $i$ and $j$, the hotspot indicators $H_{i}$
and $H_{j}$ are independent iff the corresponding environment $B_{i}$
and $B_{j}$ are disjoint, $B_{i}\cap B_{j}=\emptyset$, otherwise
they are positively correlated. In the uniform reporting scenario,
define the probability that all the hotspot indicators are false as
$P_{ep}\triangleq\Pr\left(\forall i\in N_{\mbox{{\tiny{reporting}}}},H_{i}=0\right)$.
The probability for such event is bounded by:

\begin{eqnarray*}
P_{ep} & \geq & \prod_{i=0}^{N_{\mbox{{\tiny{reporting}}}}-1}\Pr(H_{i}=0,H_{j}=0|B_{i}\cap B_{j}=\emptyset)\\
 & = & \left(1-p^{k}\right)^{N_{\mbox{{\tiny{reporting}}}}}.
\end{eqnarray*}

Therefore, as $p\rightarrow0$ ,$N_{\mbox{{\tiny{reporting}}}}\rightarrow\infty$

\[
\Pr\left(\forall i\in N_{\mbox{{\tiny{reporting}}}},H_{i}=0\right)\rightarrow\exp\left(-p^{k}N_{\mbox{{\tiny{reporting}}}}\right)
\]

According to the central limit theorem, $N_{\mbox{{\tiny{reporting}}}}=pN$
with high probability. By Applying Lemma \ref{lem:binmoial sum} and
choosing $k$ such that $p^{k}\in O(N^{-1})$, we have $\Pr\left(\forall i\in N_{\mbox{{\tiny{reporting}}}},H_{i}=0|\mathcal{\tilde{\mathcal{I}}}\right)\rightarrow1.$

Similarly to Theorem \ref{prop:non-vanishing boundary}, set $P_{in}=p_{in}^{k}$,
$N'_{\mbox{{\tiny{reporting}}}}=\gamma N{}_{\mbox{{\tiny{reporting}}}}/(f+1)$.
In the epidemic setting, we can again apply Corollary 2.6 in \citet{Janson2004},
but now with a deviation of $t=N'_{\mbox{{\tiny{reporting}}}}P$.

Then, the type II error is bounded by

\[
E_{II}\leq\exp\left(-\frac{N'_{\mbox{{\tiny{reporting}}}}P\left(1-\left(K^{2}+1\right)/4N\right)}{2\left(K^{2}+1\right)}\right).
\]

The type II error tends to zero if $N'_{\mbox{{\tiny{reporting}}}}P\rightarrow\infty.$
The latter condition can also be written as $q\alpha\gamma\left(q^{K}+\left(\alpha qf\right)^{K}\right)\in\omega\left(N^{-1}\right)$,
while the condition for the type II correct convergence is $\left(\alpha qf\right)^{K+1}\in O(N^{-1}).$
\end{proof}
In particular, if $N_{\mbox{{\tiny{reporting}}}}/N\in O(N^{-0.5})$,
and $q^{2}\alpha\in\omega\left(N^{-1}\right)$ then it is sufficient
to choose $K=1$, i.e., to count the number of infected nearest neighbors
pairs. 

This result is particularly useful for small infections (for example,
$\alpha\in\Theta\left(N^{-0.3}\right)$). It shows that it is possible
to detect epidemics using the hotspot aggregator algorithm even when
the ratio of the number of false positives to the number of infected
nodes tends to infinity (for example, $f\in\Theta\left(N^{0.7}\right)$),
and the ratio of truly reporting nodes number to the number of infected
nodes tends to infinity (for example, $q\in\Theta\left(N^{-0.3}\right)$),
a \emph{quadratic ``needle in a haystack''} scenario.

Finally, note that the requirement for the algorithm convergence is
that the boundary would not contain all the nodes. Therefore, even
if multiple sources of an epidemic exist, then as long as this condition
is satisfied, the algorithm converges correctly. Put differently,
the hotspots aggregator will converge correctly as long as there are
sources that had the chance to infect their \emph{local} environment,
rather than infect a large portion of the network. Hence, this algorithm
is able to identify and nip contagions in the bud.

Next, we deploy this algorithm on both random network models, such
as Erdos-Renyi networks or scale-free networks, and real-world networks,
such as an enterprise email network, the Internet AS topology and
Facebook.

\section{\label{sec:Experiments}Experiments}

We perform empirical tests of our algorithm on both synthetic and
real data. We focus on demonstrating its accuracy, running time and
scalability in numerous settings. These simulations also demonstrate
the ease-of-use of our algorithm in real experiments. While the theorems
themselves require some care in choosing the parameters of the algorithm,
here we find that 1, 2, and 3-hop local neighborhoods perform extremely
well across a wide range of settings, including the setting of an
overwhelming number of false positives, and the setting of up to 200
infection seeds (initial infected nodes).

\textbf{Synthetic Graphs}. We consider the algorithm performance in
the acute regime, where the number of false positives and false negatives
each tends to infinity. We tested this (Fig. \ref{fig:error rate})
on an Erdos-Renyi graph in the regime where the giant component emerges,
$G(N=8000,p=2/N$) (\citet{Durrett2010}). The error rate decays rapidly
as the size of the graph increases, as predicted by Theorem \ref{prop: small epidemics converges},
even as the fraction of truly reporting nodes among the reporting
nodes tends to zero. Moreover there are infinitely more false negatives
than true positives, as the network size increases. We also compare
against the Median-ball-algorithm given in \citet{Milling2013}. While
the Median algorithm may be converging, we see that the convergence
of our algorithms for $l=1,2,3,4$ is remarkably faster.

Interestingly, in this experiment, the $l=4$ version of our algorithms
seems to have a slightly worse performance compared to $l=1,2$. This
is likely because the regime of $G(N,p)$ we consider has diameter
$\Theta\left(\log N\right)$, and thus even 4-hop neighborhoods are
considerable in size.

In real-world networks, the degree distribution often follows a power
law. For random power law networks, the network diameter is, like
in the $G(N,p)$ , $\Theta(\log N)$, or even $\Theta(\log\left(\log N\right))$
(\citet{Cohen2003}). In such \emph{small-world} networks, choosing
a large ball radius deteriorates inference performance due to an overflow
of the local ball about an infected node to an uncontaminated region.

\begin{figure}
\centering{}\centering\includegraphics[width=0.8\columnwidth]{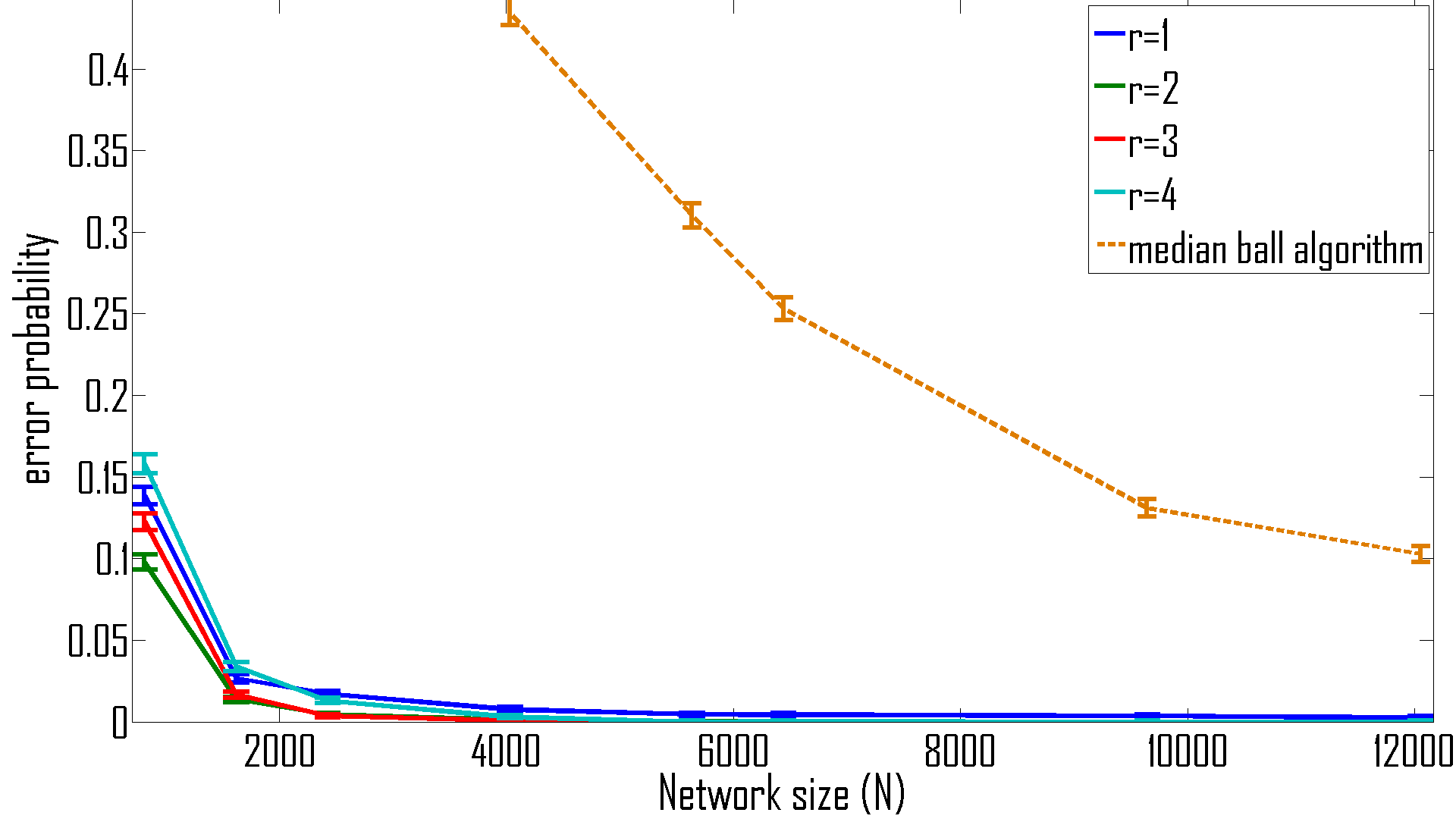}\caption{\label{fig:error rate}The mean error rate for different ball radii
for an Erdos Renyi graph $G(N,p=2/N)$. In this configuration, the
number of false reporting nodes tends to infinity, while the fraction
of truly reporting nodes among the reporting nodes tends to zero.
In addition, the true reporting probability tends to zero, so that
the number of false negative is infinitely higher than the number
of true positives. Furthermore, the epidemic size is small, $O(N^{0.3})$.
The simulation details are elaborated in the appendix.}
\end{figure}

\textbf{Real World Graphs}. We next consider a real world graph, namely,
the enterprise email network of Enron employees (\citet{Klimt}).
In part, this is motivated by the fact that many computer viruses
spread by email attachments. Figure \ref{fig:enron} shows the setting
of a small infection, with a very large fraction of false negatives,
a large majority of false positives, \emph{and many initial sources
of infection}. Our results demonstrate the stability of our algorithm
with respect to the number of initial infection seeds. Even with as
many as 200 initial seeds, the performance of our algorithm is hardly
affected. We note again a better performance for radius values $r=1,2,3$
in our algorithm; this is for the same reason as discussed above.

Additional instances of diffusive processes are the sharing of viral
media and the adoption of memes on social networks. We examined the
identification of such phenomena by testing our algorithm on a network
of $63K$ Facebook contacts (\citet{Viswanath2009}). Finally, cascades
of router failures due to misconfiguration or BGP attacks is a major
security concern. We have performed experiments on these graphs as
well, but for space concerns defer these to the appendix.

\begin{figure}
\centering{}\centering\includegraphics[width=0.75\columnwidth]{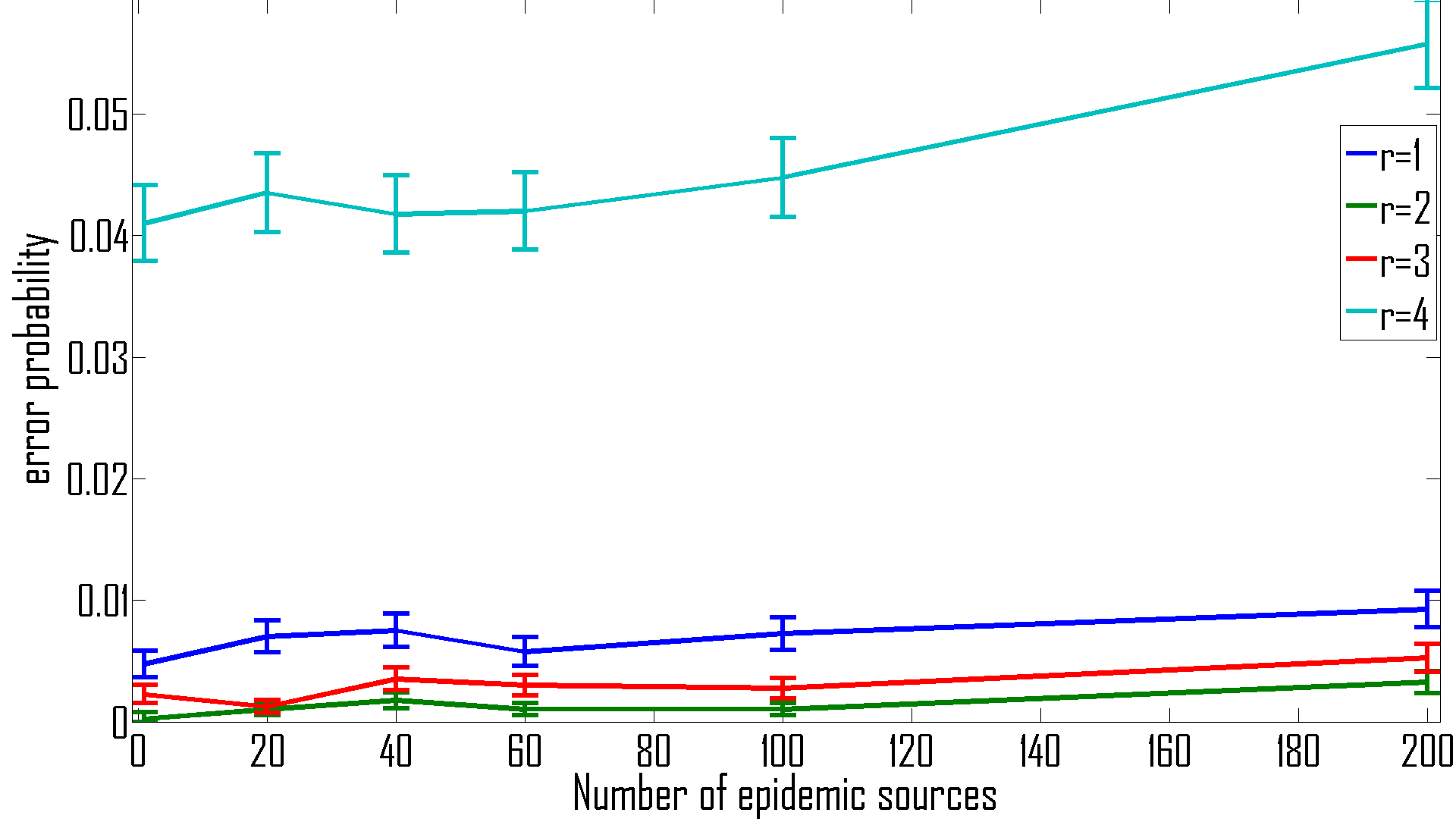}\caption{\label{fig:enron}The mean error rate as a function of the number
of epidemic sources for Enron email network. The network is comprised
of $N=33,696$ nodes. The number of infected nodes is small, $0.06N$,
and the number of truly reporting nodes is miniscule, $0.003N$. Finally,
for each truly reporting node there are $8$ false positives ($f=8$).
Even in this challenging situation, the error rate is low. }
\end{figure}

Last, we tested our algorithm in the presence of \emph{noisy network}
information, to test robustness to knowledge of local neighborhoods.
Indeed, a person may erroneously estimate the distance to her peers,
consequently resulting in an incorrect set of nearest neighbors, or
the deformation of the ball centered about her. In our experiments,
we account for the effect of noisy network information by allowing
an expected fraction of the conceived inter-node distances to increase
or decrease by $d$ (when the unmodified distance is greater than
$d$). Consequently, the inference algorithm is performed on a different
network than the one the epidemic evolved on. Note that such a modification
is not symmetric, as one node may correctly estimate the distance
to its peer, while its peer may not. The error rate of our inference
algorithm is present in Fig. \ref{fig:distance error}. The results
here show that our algorithm is robust to such noisy network knowledge.

In an epidemic scenario, if a reporting node is deep in the infected
region, there exists some buffer to the uninfected nodes at the contagion
border, and the overestimation effect should be negligible. In contrast,
if the reporting node's exterior shell is close to the border, then
the reporting node may mistake an uninfected node as a member of its
local ball and performance may deteriorate. Therefore, we expect that
the type II error should increase with the ball radius and the misestimation
rate, expressed in either increased misestimation probability or the
true distance deviation. These two effects are observed in Fig. \ref{fig:distance error}.
Finally, the type I error is unaffected by this type of topological
noise.

\begin{figure}
\begin{centering}
\includegraphics[width=0.8\columnwidth]{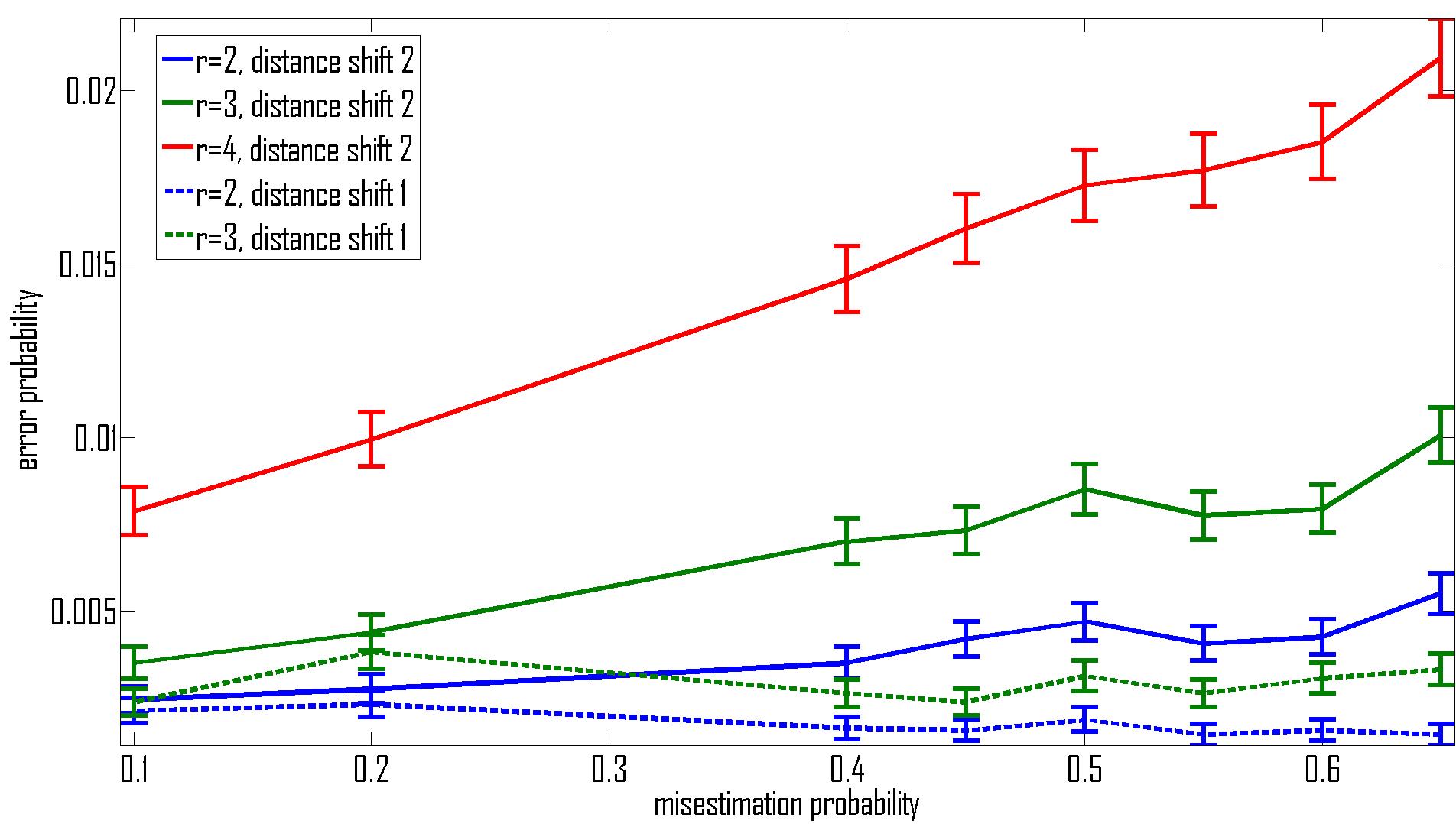} 
\par\end{centering}

\caption{\label{fig:distance error}The error probability under additional
topological noise for an Erdos Renyi network $G(N=4000,p=2/N)$. The
mean error is plotted against the probability that a node will estimate
its distance to another node. The fault is equally likely to be an
overestimate or underestimate of one (dashed lines) or two (solid
lines) from the true nodes distance. The number of infected nodes
is about $0.2$ of the whole population of $N=4000$. The number of
true positives is $0.06N$, while the number of false positives is
$0.08N$.}
\end{figure}

\section{Conclusions}

Our world becomes increasingly interconnected by multiple different
networks. While this interconnection speeds information transfer and
facilitates communication, is it also increases the likelihood of
contagion outbreaks. Depending on the context, such outbreaks may
be computer viruses, opinion trends, or epidemics such as malaria.
The hair-trigger identification of these diffusive processes is crucial
and one may be forced to rely on a single reporting map snapshot in
time. Furthermore, a complete knowledge of the underlying network
graph is generally unrealistic. We have provided both analytical proofs
and experiments on real data, showing that the hotspot aggregator
algorithm solve this statistical inference problem in its dire setting
on a wide variety of network and settings. We have shown that this
algorithm classifies correctly even when the rate of false positives
and false negatives is infinitely higher than the number of true positives,
as well as in the presence of multiple sources of epidemics.

We have simulated this algorithm both on random networks, such as
Erdos-Renyi graphs or scale-free graphs, and on real world networks,
such as Facebook, the Internet AS-topology and Enron email chain.
Our simulations exhibit the exponential decrease of the error rate
with the size of the network, as predicted. In practice, the error
rates are extremely low, even when there are errors in estimating
the network structure. Finally, the complexity of our algorithm is
low, and it is clearly scalable, as shown by the low error rate on
the Facebook network of over $60K$ nodes.

\pagebreak{}

\raggedbottom

\bibliographystyle{icml2014}

\pagebreak{}

\appendix

\section{Appendix A - Additional lemmas}
\begin{lem}
\label{lem: inside outside random probabilities} Denote the reporting
probability of an infected node in the epidemic scenario by $p_{in}$.
Similarly, denote the node reporting probability in the uniform reporting
scenario as $p$. Then, $p_{in}>p.$\end{lem}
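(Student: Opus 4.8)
The plan is to turn the definition of the uniform-reporting probability $p$ into an identity and then read off the inequality. Recall that $p$ is chosen so that the expected number of reporting nodes agrees in the two scenarios. In the epidemic scenario this expected count is, by linearity of expectation, $|S|\,p_{in}+|S^{c}|\,p_{out}$, where $p_{out}$ denotes the probability that a given node of $S^{c}$ reports (necessarily through the false-positive mechanism). Since $|S|=\alpha N$ and the total expected count equals $Np$, we obtain the identity $p=\alpha p_{in}+(1-\alpha)p_{out}$. As $0<\alpha<1$ in the nondegenerate case $|S|<N$, $p$ is a strict convex combination of $p_{out}$ and $p_{in}$, and hence it suffices to prove the (intuitively obvious) fact that $p_{out}<p_{in}$.

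To show $p_{out}<p_{in}$, condition on the number $m=|S_{r}|$ of genuine reports. Every node --- whether in $S$ or in $S^{c}$ --- is tagged as one of the $fm$ extra reporters, which are drawn uniformly over the whole network, with the same conditional probability $\pi(m)$ by exchangeability of that selection; note $\pi(m)<1$ unless reporting is saturated. A node of $S^{c}$ reports precisely when it is so tagged, so $p_{out}(m)=\pi(m)$. A node of $S$ reports when it is tagged \emph{or} when it is among the $m$ genuine reporters, the latter having conditional probability $m/|S|>0$; thus $p_{in}(m)\ge\pi(m)=p_{out}(m)$, with strict inequality whenever $m\ge 1$. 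Since $m\ge 1$ with positive probability, taking expectations over $m$ gives $p_{in}=\mathbb{E}[p_{in}(m)]>\mathbb{E}[p_{out}(m)]=p_{out}$, and plugging into the identity yields $p=\alpha p_{in}+(1-\alpha)p_{out}<p_{in}$.

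The one place needing care --- and, I expect, the only real (mild) obstacle --- is the correlation structure of the extra reporters: since they are drawn without replacement, a node's ``tagged'' indicator is not independent of its genuine-report indicator, and the tagged-indicators of distinct nodes are negatively correlated. This does no harm here because only per-node \emph{marginal} reporting probabilities enter the identity and the comparison above; conditioning on $m=|S_{r}|$ and invoking exchangeability of the uniform selection isolates exactly these marginals, while the cross-node dependencies are irrelevant to the statement. Finally, it is worth flagging that the margin $p_{in}-p=(1-\alpha)(p_{in}-p_{out})$ shrinks to $0$ as $\alpha\to1$, in agreement with the paper's observation that on a (near-)complete contact graph the epidemic and uniform-reporting hypotheses become indistinguishable.
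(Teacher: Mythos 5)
Your proof is correct, and it takes a genuinely different route from the paper's. The paper's own proof writes down explicit mean-field expressions, $p_{in}=q+(1-q)\frac{fq\alpha}{1-q\alpha}$ and $p=(f+1)q\alpha$, and verifies $p_{in}-p>0$ by direct algebra; the difference factors as $\frac{q}{1-q\alpha}(1-\alpha)\bigl(1-(f+1)q\alpha\bigr)$, which is positive precisely when $\alpha<1$ and $p<1$ (the paper's displayed factorization contains a sign typo, but this is the content). You instead exploit the defining property of $p$ --- matching of expected report counts --- to obtain the exact identity $p=\alpha p_{in}+(1-\alpha)p_{out}$, and then prove $p_{out}<p_{in}$ by conditioning on $|S_{r}|=m$ and using exchangeability of the uniform tagging; one can check your identity is consistent with the paper's formulas, since $\alpha p_{in}+(1-\alpha)\frac{fq\alpha}{1-q\alpha}=(f+1)q\alpha$. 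Your argument is arguably cleaner and more robust: it does not replace $|S_{r}|$ by its expectation, it goes through under either reading of how the $f|S_{r}|$ extra reporters are drawn (uniformly over all of $V$, or over the pool of non-genuinely-reporting nodes --- in the latter case your bound $p_{in}(m)-p_{out}(m)=\frac{m}{|S|}(1-\pi(m))>0$ is exactly what survives), and it makes the degeneracies ($\alpha\to1$, saturated tagging $\pi(m)\to1$) transparent, which the paper only implicitly assumes away. What the paper's explicit computation buys in exchange is a closed form for the gap, giving $p_{in}-p\in\Theta(q)$, which is used downstream in Lemma \ref{lem:The-errors-expressed} to control the error exponents; your identity $p_{in}-p=(1-\alpha)(p_{in}-p_{out})$ would recover that scaling with one additional line, but as written your proof establishes only strict positivity.
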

\begin{proof}
First, note that 
\begin{eqnarray*}
p_{in} & = & q+(1-q)\frac{fq\alpha}{1-q\alpha}\\
 & = & q\left(1+(1-q)\frac{f\alpha}{1-q\alpha}\right)
\end{eqnarray*}

while 
\[
p=\left(f+1\right)q\alpha
\]

Therefore,

\begin{eqnarray*}
p_{in}-p & > & q+(1-q)\frac{fq\alpha}{1-q\alpha}-\left(f+1\right)q\alpha\\
 & = & \frac{q}{1-q\alpha}\left(1-q\alpha+(1-q)f\alpha-\left(f+1\right)\alpha\left(1-q\alpha\right)\right)\\
 & = & \frac{q}{1-q\alpha}\left(1-q\alpha-qf\alpha+f\alpha-f\alpha-\alpha+q\alpha^{2}+fq\alpha^{2}\right)\\
 & = & \frac{q}{1-q\alpha}\left(\left(1-\alpha\right)(1+q\alpha+qf\alpha\right)\\
 & > & 0
\end{eqnarray*}

as $\alpha>0$.\end{proof}
\begin{lem}
\label{lem:The-errors-expressed}The errors expressed in Theorem \ref{prop:non-vanishing boundary}
tends to zero under the corresponding conditions.\end{lem}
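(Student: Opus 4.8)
# Proof Proposal for Lemma \ref{lem:The-errors-expressed}

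\textbf{Setup and goal.} The plan is to take the two error bounds derived in the proof of Theorem \ref{prop:non-vanishing boundary}, namely
\[
E_{I}\leq\exp\left(-N_{\mbox{{\tiny{reporting}}}}\frac{\left(P_{in}-P\right)^{2}}{16\left(K^{2}+1\right)\left(P+\left(P_{in}-P\right)/6\right)}\right),\qquad
E_{II}\leq\exp\left(-N_{\mbox{{\tiny{reporting}}}}\frac{\left(P_{in}-P\right)^{2}}{16\left(K^{2}+1\right)P_{in}}\right),
\]
with $P=p^{K}$ and $P_{in}=\gamma p_{in}^{K}/(f+1)$, and show that under the hypotheses of the theorem (namely $N_{\mbox{{\tiny{reporting}}}}=\Theta(N)$, $\Pr(\gamma(K)\neq 0)\to 1$, and $K=\lceil\log(\gamma^{-1}(f+1))\rceil$) both exponents tend to $-\infty$, so that $E_I,E_{II}\to 0$ and in fact decay like $o(\exp(-cN_{\mbox{{\tiny{reporting}}}}))$. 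The whole lemma is really a bookkeeping exercise: track how $P$, $P_{in}$, $P_{in}-P$ and $K^2$ scale with $N$, and check that the ratio in each exponent stays bounded away from $0$.

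\textbf{Key steps.} First I would establish the separation $P_{in}\geq 2P$ (equivalently $P_{in}-P\geq P_{in}/2\geq P$), which is exactly what the choice of $K$ in condition (b) buys us: from Lemma \ref{lem: inside outside random probabilities} we have $p_{in}>p$, and $K\geq\log(\gamma^{-1}(f+1))$ forces $\gamma p_{in}^K/(f+1)\geq p_{in}^K\cdot(\gamma/(f+1))$ to dominate $p^K$ by the required factor; this is the content of eq.~\eqref{eq:K}. Given $P_{in}-P\geq P_{in}/2$, the type II exponent is bounded below in magnitude by
\[
N_{\mbox{{\tiny{reporting}}}}\frac{(P_{in}/2)^2}{16(K^2+1)P_{in}}=\frac{N_{\mbox{{\tiny{reporting}}}}\,P_{in}}{64(K^2+1)},
\]
and likewise, since $P+(P_{in}-P)/6\leq P_{in}$, the type I exponent is at least $N_{\mbox{{\tiny{reporting}}}}(P_{in}/2)^2/(16(K^2+1)P_{in})$ up to constants, i.e.\ of the same order. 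So it suffices to show $N_{\mbox{{\tiny{reporting}}}}\,P_{in}/(K^2+1)\to\infty$. Now $K=\lceil\log(\gamma^{-1}(f+1))\rceil=\Theta(\log N)$ in the worst case (using $\gamma\geq 1/(\alpha N)$ and $f=O(N)$), so $K^2+1=O(\log^2 N)$; and $P_{in}=\gamma p_{in}^K/(f+1)$. Plugging $p_{in}^K\geq p^K=((f+1)q\alpha)^K$ and $K=\log((f+1)/\gamma)$ (base chosen so the algebra closes, cf.\ the base choices in the main text), one gets $P_{in}=\Theta(1)$ after the exponential cancellation built into condition (b): $p_{in}^{K}/(f+1)\cdot\gamma$ is designed so that $(f+1)/\gamma$ raised to the log and the $p_{in}^K$ term balance. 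Hence $N_{\mbox{{\tiny{reporting}}}}\,P_{in}/(K^2+1)=\Theta(N/\log^2 N)\to\infty$, and when $\gamma=\Theta(1)$ and $K=O(1)$ we even get $\Theta(N)$, giving the $\exp(-cN_{\mbox{{\tiny{reporting}}}})$ rate.

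\textbf{Handling the randomness of $\gamma$.} One subtlety is that $\gamma(K)$ is itself a random variable depending on the epidemic realization, so the bounds above are conditional on the event $\{\gamma(K)\geq\gamma_0\}$ for a suitable deterministic floor $\gamma_0$ (e.g.\ $\gamma_0=1/(\alpha N)$, or $\gamma_0=\Theta(1)$ in the constant-$K$ case). I would condition on this event, note it has probability $\to 1$ by hypothesis (a), apply the deterministic estimates above on it, and bound the total error by $\Pr(\gamma(K)<\gamma_0)+\sup_{\gamma\geq\gamma_0}(E_I+E_{II})$, both of which vanish. This also explains why in the $\gamma=\Theta(1)$ regime one may take $K$ constant and still win.

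\textbf{Main obstacle.} The routine part is the exponent algebra; the part that needs genuine care is verifying that the exponential factors $p_{in}^K$ (and $p^K$) do \emph{not} decay faster than $(K^2+1)/N_{\mbox{{\tiny{reporting}}}}$ — i.e.\ that raising a small probability to the $K\sim\log N$ power does not kill $P_{in}$. This is precisely why condition (b) pins $K$ to $\log(\gamma^{-1}(f+1))$ rather than anything larger: it makes $(f+1)^{-K}$ cancel against the combinatorial growth of the number of interior hotspots, leaving $P_{in}$ polynomially small at worst, against which the factor $N/\log^2 N$ still diverges. Getting this cancellation to land on the right side of the inequality, uniformly over the admissible range of $f$, $q$, $\alpha$ (including $q\in\omega(N^{-1})$, $f\in\omega(q^{-1})$ as in Corollary \ref{cor:large number of reporting}), is the crux, and I would do it by substituting the explicit forms $p=(f+1)q\alpha$ and the lower bound $p_{in}>p$ from Lemma \ref{lem: inside outside random probabilities} and checking the worst case $f=\Theta(N)$, $\gamma=1/(\alpha N)$ directly.
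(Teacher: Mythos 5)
Your overall strategy --- lower-bound the magnitude of the two exponents and show they diverge --- has the same shape as the paper's argument, but the two central quantitative claims you rest it on do not hold, and they are precisely what the paper's case analysis exists to replace. The separation $P_{in}\geq 2P$ does not follow from condition (b): you need $(p_{in}/p)^{K}\geq 2(f+1)/\gamma$, and $K\approx\log(\gamma^{-1}(f+1))$ delivers this only if $p_{in}/p$ exceeds roughly $e$. But in the highly noisy regime the theorem targets ($f\alpha\in\omega(1)$, $q\to 0$) one has $p_{in}\approx qf\alpha$ and $p=(f+1)q\alpha\approx fq\alpha$, so $p_{in}/p\to 1$ and $\delta=p_{in}-p$ is lower order than $p$. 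The gap $P_{in}-P$ is then a near-cancellation of comparable quantities, and the paper must split into the cases $\delta/p\to 0$, $\delta/p\to\mathrm{const}$, and $p/\delta\to 0$, computing the leading order of $\left(P_{in}-P\right)^{2}$ separately in each (obtaining $p^{2K}\left(1-1/(\gamma(f+1))\right)^{2}$, $\Theta(p^{2K})$, and $\left(\gamma\delta^{K}/(f+1)\right)^{2}$ respectively). Your argument skips exactly this step.

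Second, $P_{in}=\Theta(1)$ is false. Under condition (b), $\gamma/(f+1)\approx e^{-K}$, so $P_{in}=\gamma p_{in}^{K}/(f+1)\approx\left(p_{in}/e\right)^{K}$, which is exponentially small in $K$; when $K=\Theta(\log N)$ this is polynomially small in $N$ with an exponent that can exceed $1$, in which case $N_{\mbox{{\tiny{reporting}}}}P_{in}/(K^{2}+1)$ need not diverge. There is no unconditional win here: the paper's proof ends by extracting explicit sufficient conditions, namely $\left(fq\alpha\right)^{2K+1}\in\omega(N^{-1})$ (equivalently $\left(N_{\mbox{{\tiny{reporting}}}}/N\right)^{2K+1}\in\omega(N^{-1})$) in the noisy cases and $\gamma^{2}q^{K}/f\in\omega(N^{-0.5})$ when $p/\delta\to 0$, under which $\exp\left(-N\Theta(p^{2K+1})\right)\to 0$. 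Your own ``main obstacle'' paragraph names this issue correctly but then asserts it resolves favorably without performing the computation; that computation is the content of the lemma. (Your treatment of the randomness of $\gamma(K)$ by conditioning on $\{\gamma\geq\gamma_{0}\}$ and using hypothesis (a) is a reasonable point that the paper leaves implicit.)
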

\begin{proof}
Set $P=p^{K},P_{in}=\gamma p_{in}^{k}/(f+1)$. According to Corollary
2.6 in \citet{Janson2004} the type I error rate can be bounded by

\[
E_{I}\leq\exp\left(-N_{\mbox{{\tiny{reporting}}}}\frac{\left(P-P_{in}\right)^{2}\left(1-\frac{\left(\Delta(\Gamma)+1\right)}{4N_{\mbox{{\tiny{reporting}}}}}\right)}{8\left(\Delta(\Gamma)+1\right)\left(1-P\right)\left(P+\left(P_{in}-P\right)/6\right)}\right)
\]

where $\Delta(\Gamma)$ is the maximal degree of the dependency graph
$\Gamma.$ However, $\Delta(\Gamma)\leq\max\{N_{\mbox{{\tiny{reporting}}}},K^{2}\}$
and therefore, 
\begin{equation}
E_{I}\leq\exp\left(-N_{\mbox{{\tiny{reporting}}}}\frac{\left(P_{in}-P\right)^{2}}{16\left(K^{2}+1\right)\left(P+\left(P_{in}-P\right)/6\right)}\right)\label{eq:E1}
\end{equation}

Set $\delta=p_{in}-p.$ First, note that if $\lim\left(P-P_{in}\right)\neq0$,
then this expression tends to zero. Therefore, if $\lim\delta\neq0$
for $N\rightarrow\infty$ than the type I error rate tends to zero.

Recall that

\begin{eqnarray*}
p_{in} & \in & \Omega\left(q+fq\alpha\right)\\
\delta & \in & \Theta(q)\\
p & \in & \Theta\left(fq\alpha\right)
\end{eqnarray*}

And therefore in this case $q\in\Omega(1)$. Otherwise, assume that
$\lim\delta=0$. If 
\[
\delta/p\rightarrow0
\]

or equivalently, $f\alpha\in\omega(1)$, the highly noise regime,
then to first order in $\delta/p$

\[
\left(P_{in}-P\right)^{2}\rightarrow p^{2k}\left(1-\frac{1}{\gamma(f+1)}\right)^{2}
\]

with $\delta=p_{in}-p.$ Likewise, if $\delta/p\rightarrow const$,
than we can write 
\begin{eqnarray*}
p & = & g(N)\tilde{p}\\
\delta & = & g(N)\tilde{\delta}
\end{eqnarray*}

where $\tilde{p}\rightarrow const\neq0$ and $\tilde{\delta}\rightarrow const\neq0$
and 
\[
\left(P-P_{in}\right)^{2}=g^{2K}(N)\left(\frac{\left(\tilde{p}+\tilde{\delta}\right)^{K}}{\gamma(f+1)}-\tilde{p}^{K}\right)^{2}
\]
Therefore, for either $K=1$ or $K=2$ we have 
\[
\lim\left(\frac{\left(\tilde{p}+\tilde{\delta}\right)^{K}}{\gamma(f+1)}-\tilde{p}^{K}\right)^{2}\neq0
\]

and therefore $\left(P-P_{in}\right)^{2}\in\Theta(p^{2k})=\Theta(\delta^{2k})$.

Recall that with high probability $N_{\mbox{{\tiny{reporting}}}}=Np$.
As, 
\[
E_{I}\leq\exp\left(-N\Theta\left(p^{2k+1}\right)\right)
\]

if 
\[
\left(fq\alpha\right)^{2K+1}\in\omega(N^{-1}),
\]

the type I error tend to 0 for $N\rightarrow\infty$. In other words,
if $K$ is such that 
\[
\left(N_{\mbox{{\tiny{reporting}}}}/N\right)^{2K+1}\in\omega(N^{-1})
\]

the error tend to 0. Alternatively, if $p/\delta\rightarrow0$ then

\[
\left(P_{in}-P\right)^{2}\rightarrow\left(\frac{\gamma\delta^{k}}{(f+1)}\right)^{2}
\]

and therefore, if 
\[
\gamma^{2}q^{K}/f\in\omega(N^{-0.5})
\]

then the algorithm converges correctly.

A similar calculation shows that if this condition holds, then the
type II error tends to 0 for $N\rightarrow\infty$. Explicitly:

\[
E_{II}\leq\exp\left(-N_{\mbox{{\tiny{reporting}}}}\frac{\left(P-P_{in}\right)^{2}\left(1-\frac{\left(\Delta(\Gamma)+1\right)}{4N_{\mbox{{\tiny{reporting}}}}}\right)}{8\left(\Delta(\Gamma)+1\right)P}\right)
\]

and following similar reasoning 
\[
E_{I}\leq\exp\left(-N_{\mbox{{\tiny{reporting}}}}\frac{\left(P_{in}-P\right)^{2}}{16\left(K^{2}+1\right)P_{in}}\right)
\]

and this expression has the same scaling properties as eq. \ref{eq:E1}.\end{proof}
\begin{cor}
Consider a large infection, where the number of infected nodes is
a constant fraction of the number of nodes, $\alpha=\mbox{\ensuremath{\Theta}(1) .}$
The algorithm converge correctly even when the reporting probability
tends to zero as $q\in\omega\left(N^{-1}\right)$ while the noise
ratio tend to infinity $f\in\omega\left(q^{-1}\right)$, as long as
conditions a) and b) of theorem \ref{prop:non-vanishing boundary}
are satisified.

In particular, if the number of truly reporting nodes is $\Theta(N)$,
then the algorithm converges correctly if $\gamma(K=const)>0$. Alternatively,
the algorithm converges correctly for every network such that $\gamma(K=2\log N)>0$,
for example, grids and tree like networks, even if the noise is $f\in\Theta(N)$. \end{cor}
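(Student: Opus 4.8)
The plan is to obtain this corollary as a direct specialization of Theorem~\ref{prop:non-vanishing boundary}, reusing the error-decay bookkeeping already carried out in Lemma~\ref{lem:The-errors-expressed}; I would not rerun the concentration argument, but only check that the hypotheses of Theorem~\ref{prop:non-vanishing boundary} hold in the stated regime and then specialize the resulting exponents. First I would verify the standing assumption $N_{\mbox{{\tiny{reporting}}}}=\Theta(N)$. Since $\alpha=\Theta(1)$: if the truly reporting nodes already number $\Theta(N)$ (i.e.\ $q=\Theta(1)$) this is immediate; and when $q\to 0$ with $f\in\omega(q^{-1})$ the nominal false-positive count $fq\alpha N\in\omega(N)$, so the number of \emph{distinct} reporting nodes saturates at $\Theta(N)$. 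In either case $p=(f+1)q\alpha=\Theta(1)$, and by Lemma~\ref{lem: inside outside random probabilities}, $p_{in}>p$, with $p_{in}/p\to 1$ when $f\alpha\to\infty$.

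Conditions (a) and (b) are assumed, so I would set $K=\lceil\log(\gamma^{-1}(f+1))\rceil$ and the threshold $T$ exactly as in Theorem~\ref{prop:non-vanishing boundary}, and invoke it: classification is correct with type~I/type~II error $o(\exp(-cN_{\mbox{{\tiny{reporting}}}}))$, where by Lemma~\ref{lem:The-errors-expressed} the exponent is of order $N_{\mbox{{\tiny{reporting}}}}(P_{in}-P)^{2}/\bigl[(K^{2}+1)(P+(P_{in}-P)/6)\bigr]$ for type~I and $N_{\mbox{{\tiny{reporting}}}}(P_{in}-P)^{2}/\bigl[(K^{2}+1)P_{in}\bigr]$ for type~II, with $P=p^{K}$ and $P_{in}=\gamma p_{in}^{K}/(f+1)$. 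It remains only to instantiate $K$ in the two illustrations. When the truly reporting nodes number $\Theta(N)$ and $\gamma(K)>0$ for some constant $K$, condition (b) is met with $K=O(1)$, hence $K^{2}+1=O(1)$ and the exponent is $\Theta(N_{\mbox{{\tiny{reporting}}}})=\Theta(N)\to\infty$, so both errors vanish. For a general network with $\gamma(2\log N)>0$ — which holds for $d$-dimensional grids, where a radius-$\Theta(\log N)$ ball has only $O(\log^{d}N)$ nodes and thus sits inside a $\Theta(N)$-sized infected region, and for tree-like graphs up to the usual branching constraint that keeps radius-$\Theta(\log N)$ balls of size $o(N)$ — one may take $K=\lceil\log(\gamma^{-1}(f+1))\rceil=\Theta(\log N)$ even when $f\in\Theta(N)$; then $K^{2}+1=O(\log^{2}N)$ while $N_{\mbox{{\tiny{reporting}}}}=\Theta(N)$, and the computation in Lemma~\ref{lem:The-errors-expressed} — which reduces correct decay to $p^{2K+1}=(fq\alpha)^{2K+1}\in\omega(N^{-1})$ and verifies $(P-P_{in})^{2}=\Theta(p^{2K})$ — shows the exponent still tends to infinity.

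The main obstacle is exactly this last quantitative balancing: $K=\Theta(\log N)$ enters quadratically in the denominator of the exponent, while the gap $(P_{in}-P)^{2}=p^{2K}\bigl(\gamma^{-1}(f+1)^{-1}(p_{in}/p)^{K}-1\bigr)^{2}$ is only polynomially small in $N$, so one must certify that $N\,p^{2K+1}/\log^{2}N\to\infty$ rather than stalling — i.e.\ that the growth rate permitted for $K$ (governed by $\gamma$ and $f$ through condition (b)) is compatible with $p=\Theta(1)$, and also that $\gamma(f+1)^{-1}(p_{in}/p)^{K}$ does not tend to exactly $1$ (the delicate $K=1,2$ cases handled in Lemma~\ref{lem:The-errors-expressed}). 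Since all of this is precisely the content of Lemma~\ref{lem:The-errors-expressed}, I expect the corollary's proof to amount to the verification above, with no genuinely new difficulty.
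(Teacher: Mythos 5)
Your proposal is correct and follows essentially the same route as the paper: the corollary is obtained by direct specialization of Theorem~\ref{prop:non-vanishing boundary}, noting that $f<N$ and $\gamma\geq 1/(\alpha N)$ force $K=O(\log N)$ in general (and $f=\Theta(1)$, hence $K=O(1)$, when the truly reporting nodes number $\Theta(N)$), with the error-decay verification delegated to Lemma~\ref{lem:The-errors-expressed}. Your write-up is more explicit about the exponent bookkeeping than the paper's three-line argument, but there is no substantive difference in approach.
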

\begin{proof}
Note that $f<N$. If $\gamma\neq0,$ then $\gamma<1/\alpha N$. In
particular, if the number of truly reporting nodes is $\Theta(N)$,
then $f\in\Theta(1)$.Therefore 
\[
K\geq-\log(\gamma)+\log(f)
\]

and substituting the corresponding values in Theorem \ref{prop:non-vanishing boundary}
concludes the proof..\end{proof}
\begin{lem}
\label{lem:binmoial sum}Consider a binomial random variable $X$
with expectation value $pN$ for $N\rightarrow\infty$. Then any sum
of Bernoulli random variables $Y_{i}$ obeys 
\begin{eqnarray*}
\mathbb{E}\left(\sum_{1}^{X}Y_{i}\geq c\right) & \rightarrow & \mathbb{E}\left(\sum_{1}^{M}Y_{i}\geq c\right)\\
\mathbb{E}\left(\sum_{1}^{X}Y_{i}\leq c\right) & \rightarrow & \mathbb{E}\left(\sum_{1}^{M}Y_{i}\leq c\right)
\end{eqnarray*}

In other words, we can asymptotically replace the summation over random
number of RVs in the summation over the mean number or RVs.\end{lem}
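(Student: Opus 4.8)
Throughout, $\mathbb{E}(\cdot)$ applied to an event is read as its probability, and we write $M:=pN=\mathbb{E}X$, assuming (as in every invocation of the lemma) that $M\to\infty$. The plan is to reduce the statement to two elementary facts: the partial-sum map $m\mapsto\sum_{i=1}^{m}Y_{i}$ is nondecreasing because every $Y_{i}\in\{0,1\}$; and a binomial $X$ with mean $M\to\infty$ concentrates sharply around $M$. Together these sandwich $\sum_{i=1}^{X}Y_{i}$ between two partial sums whose upper limits are \emph{deterministic} and close to $M$, and then it only remains to let those limits tend to $M$.

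First I would fix $\varepsilon>0$ and apply the multiplicative Chernoff bound to obtain $\Pr(|X-M|>\varepsilon M)\le 2\exp(-c_{\varepsilon}M)=o(1)$, so the event $A_{\varepsilon}:=\{\lfloor(1-\varepsilon)M\rfloor\le X\le\lceil(1+\varepsilon)M\rceil\}$ has probability $1-o(1)$. On $A_{\varepsilon}$, monotonicity of the partial sums gives $\sum_{i=1}^{\lfloor(1-\varepsilon)M\rfloor}Y_{i}\le\sum_{i=1}^{X}Y_{i}\le\sum_{i=1}^{\lceil(1+\varepsilon)M\rceil}Y_{i}$. Intersecting the event $\{\sum_{i=1}^{X}Y_{i}\ge c\}$ with $A_{\varepsilon}$, invoking these inequalities, and absorbing $\Pr(A_{\varepsilon}^{c})=o(1)$ on both sides yields
\[
\Pr\Big(\sum_{i=1}^{\lfloor(1-\varepsilon)M\rfloor}Y_{i}\ge c\Big)-o(1)\ \le\ \Pr\Big(\sum_{i=1}^{X}Y_{i}\ge c\Big)\ \le\ \Pr\Big(\sum_{i=1}^{\lceil(1+\varepsilon)M\rceil}Y_{i}\ge c\Big)+o(1),
\]
and the analogous two-sided bound for the event $\{\,\cdot\le c\,\}$ (replacing $\ge c$ by $\le c$ and swapping the two bracketing limits).

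To close the gap I would write $\sum_{i=1}^{\lceil(1+\varepsilon)M\rceil}Y_{i}=\sum_{i=1}^{M}Y_{i}+R$ with $0\le R\le\varepsilon M+1$, and similarly for the lower sum; then each bracketing probability differs from $\Pr(\sum_{i=1}^{M}Y_{i}\ge c)$ by at most $\Pr(R\ge1)$. By a union bound $\Pr(R\ge1)\le(\varepsilon M+1)\max_{i}\Pr(Y_{i}=1)$, and in each regime where the lemma is used the $Y_{i}$ are indicators obeying the rare-event scaling $M\cdot\max_{i}\Pr(Y_{i}=1)=O(1)$ that makes the threshold test meaningful, so $\Pr(R\ge1)=O(\varepsilon)$. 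Taking $\varepsilon=\varepsilon_{N}\downarrow0$ slowly enough that the Chernoff estimate still gives $o(1)$ (e.g.\ $\varepsilon_{N}=M^{-1/3}$), all the $o(1)$ terms and $\Pr(R\ge1)$ vanish, the two brackets collapse onto $\Pr(\sum_{i=1}^{M}Y_{i}\ge c)$, and the $\le c$ statement follows by complementation.

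The step I expect to be the real obstacle is the last one. The lemma as stated is deliberately informal about the joint law of the $Y_{i}$ and about how $c$ scales with $N$, and controlling the $|X-M|$ ``boundary'' Bernoulli terms genuinely requires that their aggregate contribution be negligible compared with the fluctuations that decide whether $\sum Y_{i}$ reaches $c$; since $|X-M|$ is only $O(\sqrt{M})$ with high probability, this is not automatic. What carries it through is the rare-event (Poisson-type) scaling of the $Y_{i}$ present in every application in this paper, where the $Y_{i}$ are conditionally i.i.d.\ indicators of vanishing mean. I would therefore either add this scaling as an explicit hypothesis of the lemma or verify it inline at each point of use.
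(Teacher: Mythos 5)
Your argument follows the same route as the paper's: concentrate the binomial $X$ about its mean via an exponential tail bound (the paper uses Hoeffding with the cutoff $M=Np-(Np)^{2/3}$, you use multiplicative Chernoff with $(1\pm\varepsilon)M$), then exploit monotonicity of the partial sums $m\mapsto\sum_{i=1}^{m}Y_{i}$ to replace the random upper limit by a deterministic one near $pN$. The one place you go beyond the paper is the final gap-closing step: the paper reduces to $\sum_{1}^{M}Y_{i}$ and then simply asserts that $M\rightarrow Np$ concludes the proof, leaving implicit exactly the point you isolate --- that the $O\left((Np)^{2/3}\right)$ (in your version, $O(\varepsilon M)$) discarded Bernoulli terms must contribute negligibly to the event $\left\{ \sum Y_{i}\geq c\right\} $. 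You are right that this is not automatic: it needs either the rare-event scaling you propose (which covers the small-infection use, where $c=T=1$ is constant) or, in the dense regime where $c$ scales linearly in $M$, the complementary observation that the gap between the threshold and the relevant mean is $\Theta(M)$ and hence dominates the $o(M)$ boundary terms; so the caveat you flag is a genuine omission in the lemma as stated rather than a defect of your proof, and verifying it inline at each point of use, as you suggest, is the right repair.
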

\begin{proof}
Set $M=Np-\left(Np\right)^{2/3}$. First, note that

\begin{eqnarray*}
\mathbb{E}\left(\sum_{1}^{X}Y\geq c\right) & = & \mathbb{E}\left(\sum_{1}^{X}Y\geq c|X>M\right)\Pr(X>M)+\mathbb{E}\left(\sum_{1}^{X}Y\geq c|X\leq M\right)\Pr(X\leq M)\\
 & \leq & \mathbb{E}\left(\sum_{1}^{X}Y\geq c|X>M\right)\Pr(X>M)\\
 & \leq & \mathbb{E}\left(\sum_{1}^{X}Y\geq c|X=M\right)\Pr(X>M).
\end{eqnarray*}

According to Hoeffding's inequality 
\begin{eqnarray*}
\Pr(X\leq M) & \leq & \exp\left(-\left(Np\right)^{1/3}\right).
\end{eqnarray*}

and therefore 
\[
\mathbb{E}\left(\sum_{1}^{X}Y\geq c\right)\leq\mathbb{E}\left(\sum_{1}^{M}Y\geq c\right)\left(1-\exp\left(-\left(Np\right)^{1/3}\right)\right)
\]

Similarly, we can apply Hoeffding's inequality for $M=Np-\left(Np\right)^{2/3}$
in

\begin{eqnarray*}
\mathbb{E}\left(\sum_{1}^{X}Y\leq c\right) & = & \mathbb{E}\left(\sum_{1}^{X}Y\leq c|X>M'\right)\Pr(X>M)+\mathbb{E}\left(\sum_{1}^{M}Y\leq c|X\leq M'\right)\Pr(X\leq M)\\
 & \leq & \mathbb{E}\left(\sum_{1}^{X}Y\geq c|X\leq M'\right)\Pr(X<M)
\end{eqnarray*}

while 
\[
\Pr(X\geq M')\leq\exp\left(-\left(Np\right)^{1/3}\right)
\]

and obtain

\[
\mathbb{E}\left(\sum_{1}^{X}Y\leq c\right)\leq\mathbb{E}\left(\sum_{1}^{M'}Y\geq c\right)\left(1-\exp\left(-\left(Np\right)^{1/3}\right)\right)
\]

but both $M,M'\rightarrow Np$ and this concludes the proof. 
\end{proof}

\section{Appendix B - Algorithm application}

In order to apply the algorithm, the parameters $K$ and $T$ must
be specified, and the number of nodes in the local environment must
be set. We assume that a good estimate for the reporting probability
of an infected node, $p_{in}$, is known. Note that for large networks
the reporting probability $p$ can be easily estimated according to
$p=N_{\mbox{{\tiny{reporting}}}}/N$. Recall that $\gamma(K)$ is
the fraction of infected nodes for which their $K$ nearest neighbor
nodes are also infected is known.

We distinguish between two classes of networks. First we discuss large
networks for which the function $\gamma(K)$ is known, whether numerically
or analytically. As an example, consider an infinite constant degree
tree with degree $d$. Assume that the infection, starting from the
root, has infected all the nodes up to the $m$-th level. In this
case, for every node that is deeper than the $m-l$ level of the tree,
all the nearest $d^{l}$ are not infected. The fraction of such nodes
is
\[
\gamma(K=d^{l})\leq\frac{d^{m-l}-1}{d^{m}}\rightarrow d^{-l}
\]

Therefore, $\gamma(K)=K^{-1}$. 

As another example, consider a $d$ dimensional grid. In this case,
the contagion is with high probability contained within an $l_{1}$
ball of radius $r=c|S|^{1/d}$, with $c\rightarrow const$ for $N\rightarrow\infty$
(\citet{Milling2012}). The number of nodes on the non-infected bordering
nodes near the surface of such ball is $\Theta(r^{d-1})$. In addition
a ball of radius $l$ contains $\Theta(l^{d})$ nodes. Therefore,
the number of interior nodes is at least $\Theta(r^{d-1}l^{d})$.
Hence, 
\[
\gamma\left(K\in\Theta(l^{d})\right)\leq\Theta\left(r^{-1}l^{d}\right)
\]

Hence for balls of radii $K\in o\left(|S|^{1/d}\right)=o\left(\left(\alpha N\right){}^{1/d}\right)$
we have 
\[
\gamma(K)\rightarrow1
\]

In these cases, Theorem \ref{prop:non-vanishing boundary} provides
a prescription for $K$ and $T$ when the number of reporting nodes
is large, $\Theta(N)$, while if the number of reporting nodes is
small, $o(N)$, the prescription in Theorem \ref{prop: small epidemics converges}
applies

For a finite size network, that is not necessarily sampled from a
statistical ensemble, $\gamma(K)$ might be not known apriori. Nevertheless,$\gamma(K)$
may be calculated at a preprocessing stage. This can be done by simulating
epidemics according to the scenario details. Then, for each infected
node, calculate the probability that $K$ of its neighboring nodes
are infected, and average over all infected nodes. As an example,
the $\gamma(K)$ function for the Internet inter-AS topology, which
is often considered a scale free network, is presented in Fig. \ref{fig:gamma}.
Note that for any network, and any value of $K,$ if there is an infected
node that $K$ of its nearest neighbors are infected, then $\gamma(K)\neq0$.
In this case, $\gamma\geq1/|S|$, where$|S|$ is the infection size.
The maximal relevant value of $K$ is $\log(|S|)$, according to Theorem
\ref{prop:non-vanishing boundary}. 

After this preprocessing stage, if the number of reporting nodes is
large one applies Theorem \ref{prop:non-vanishing boundary} , or
Theorem \ref{prop: small epidemics converges} otherwise.

\begin{figure}
\includegraphics[width=1\columnwidth]{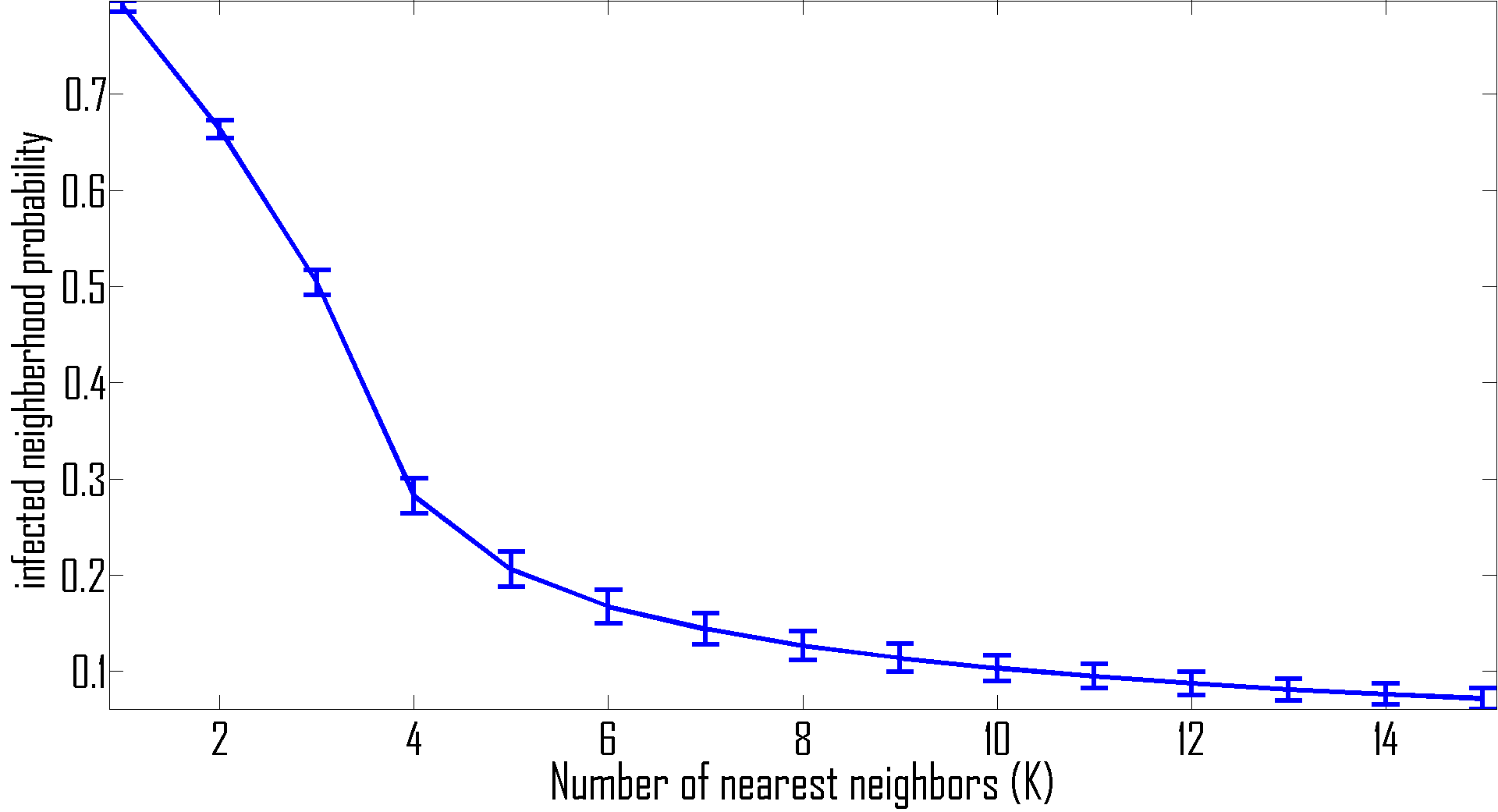}\caption{\label{fig:gamma}The probability $\gamma(K)$ that $K$ nearest neighbors
of an infected node will be infected as well. The plot was generated
for the Internet inter-AS graph, composed of $N\approx27K$ nodes.
Even for a small infection size $(|S|=0.06N$) the fraction $\gamma(K)$
is significant even up to the maximal relevant value of $K=\log(|S|)\approx7.5.$}
\end{figure}

Finally, it is important to note that one can perform multiple tests
with various values of $K$. If no epidemic occurred, then in all
those tests the number of hotspots should be close to the expected
number of a uniform reporting event. If any tests results is far from
this value, then an epidemic occurred. This approach is particularly
relevant if the critical requirement is high specificity, and if no
information on the possible epidemic is available. Note that as there
are at most $\log(N)$ tests, applying multiple tests does increase
the algorithm complexity appreciably.

\section{Appendix C - Additional experiments}

\begin{figure}
\centering{}\centering\includegraphics[width=1\columnwidth]{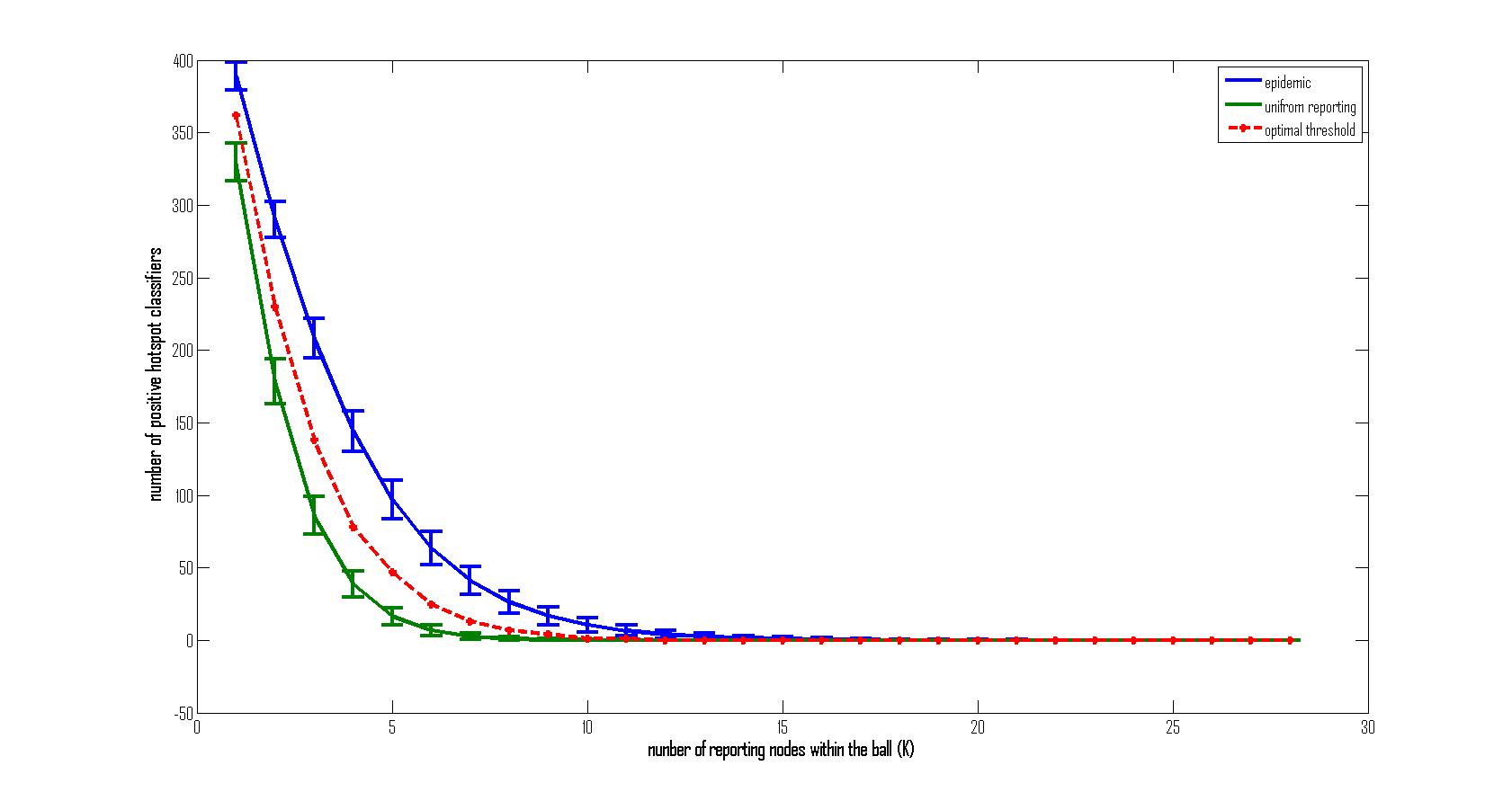}\caption{\label{fig:separibility}The mean number of hotspots (and the standard
deviation) as a function of $K$. The plot was generated for an Erdos
Renyi graph $G(N=8000,p=2/N)$ and ball radius $l=3$. The simulation
parameters are $\alpha=0.13,\: q=0.22$ and $f=1$. The plot was generated
by Monte Carlo simulations of $2000$ trials for each scenario.}
\end{figure}

In this section we present additional simulations, some of which are
described in the main body of this paper.

First, we considered idenfitiability on an Erd�s-Renyi graph in the
regime where the giant component emerges, $G(N=8000,p=2/N$) (\citet{Durrett2010}).
We performed $2000$ trials for each scenario (epidemic or uniform
reporting) and evaluated the mean number of hotspots as a function
of the threshold $K$ for a ball of a given radius. The resulting
graph for a ball of radius $l=3$ is presented in Fig. \ref{fig:separibility},
including the corresponding standard deviations. The graph shows that,
as our theory predicts, the number of hotspots is indeed a strong
statistic for distinguishing between an epidemic and a random illness.
This was followed, as described in the paper, by evaluating the algorithm
performance in the acute regime, where the number of false positives
and false negatives each tends to infinity (Fig. \ref{fig:error rate}).
In this figure, the infection size is $O(N^{0.3})$ while the reporting
probability is $O(N^{-0.2}$). The ratio of false positives to true
positives is $O(N^{0.3})$. The mean error is the average of the type
I and type II error probabilities, assuming equal likelihoods of an
epidemic and a uniform reporting event. The plot was obtained by Monte
Carlo simulations of $2000$ trials for each scenario.

We have followed suit and applied the algorithm on real world networks.
In Fig. \ref{fig:facebook graph} the algorithm's error rate on a
subgraph of the Facebook network is presented. Even though the noise
level is extremely high and the infection is tiny, the error rate
is negligible. We have also simulated our algorithm on the Internet
autonomous system (AS) network, in order to examine whether this algorithm
is able to detect failure cascades of BGP routers (Fig. \ref{fig:internet}).
In both cases, as discussed in Section \ref{sec:Experiments}, the
algorithm showed very good results, even for the large ball radius
$l=4$. These results were obtained in a challenging setting, under
a highly noisy environment with multiple epidemic sources, where only
a mere fraction of the nodes truly report. In both cases, the Median
Ball algorithm's error was close to $0.5$, almost as high as a random
classifier. While in principle this algorithm might be modified to
include multiple seeds, this modification requires knowledge of seed
number, which is rarely known. In addition, if the Median Ball algorithm
are modified, the type I error increases appreciably. 

\begin{figure}
\centering{}\includegraphics[width=1\columnwidth]{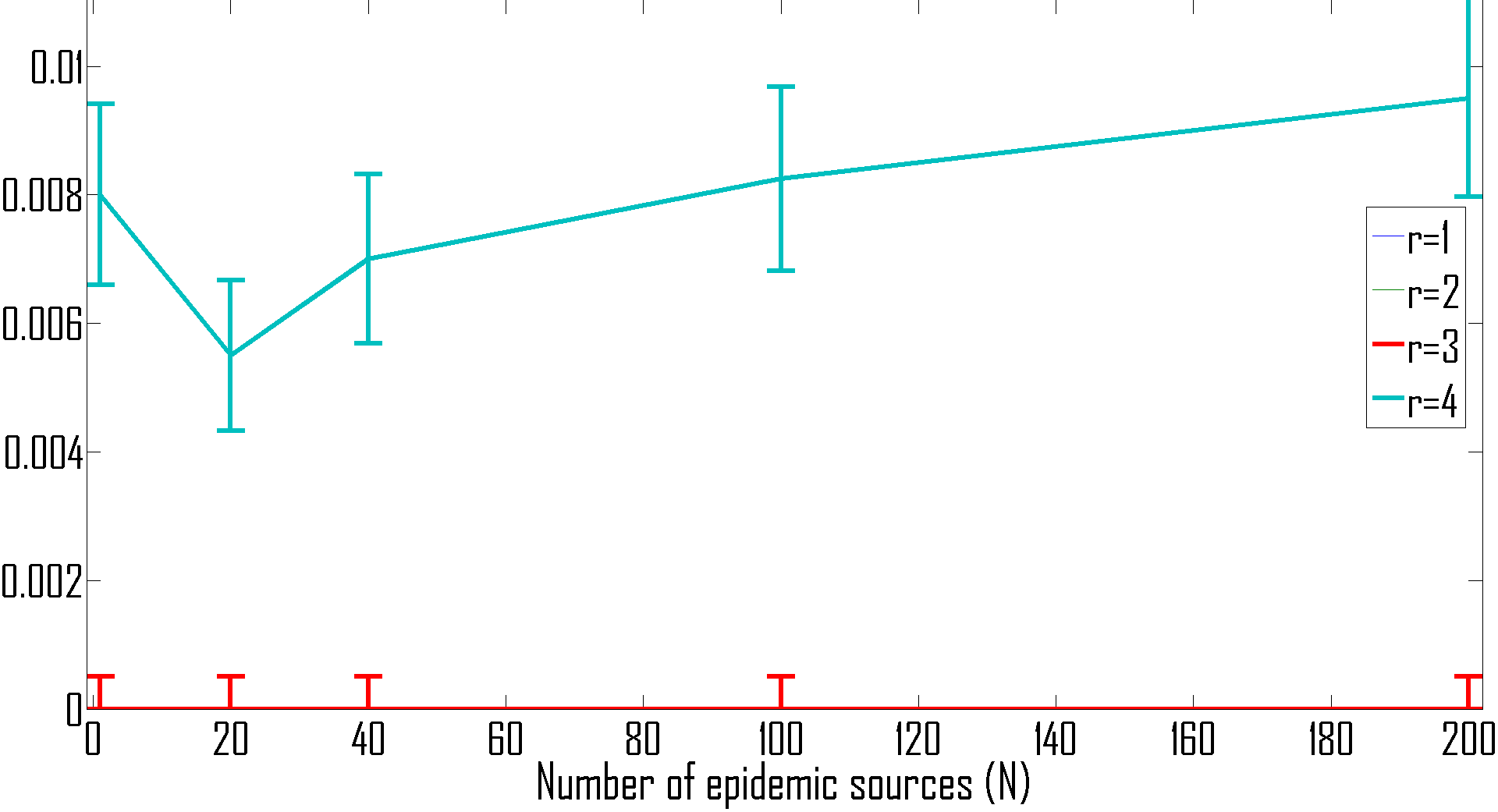}\caption{\label{fig:facebook graph}The error probability as a function of
the number of epidemic sources on a Facebook graph. The algorithm
with ball radii $r=1,2,3$ correctly classified every one of the $4000$
samples, and the error rate for the ball with radius $r=4$ is low
as well. The infected component is composed of only $3\%$ of the
entire graph. The number of truly reporting nodes is a mere $0.3\%$
of the graph. Finally, for each truly reporting node there are $8$
false positives ($f=8$). }
\end{figure}

\begin{figure}
\centering{}\includegraphics[width=1\columnwidth]{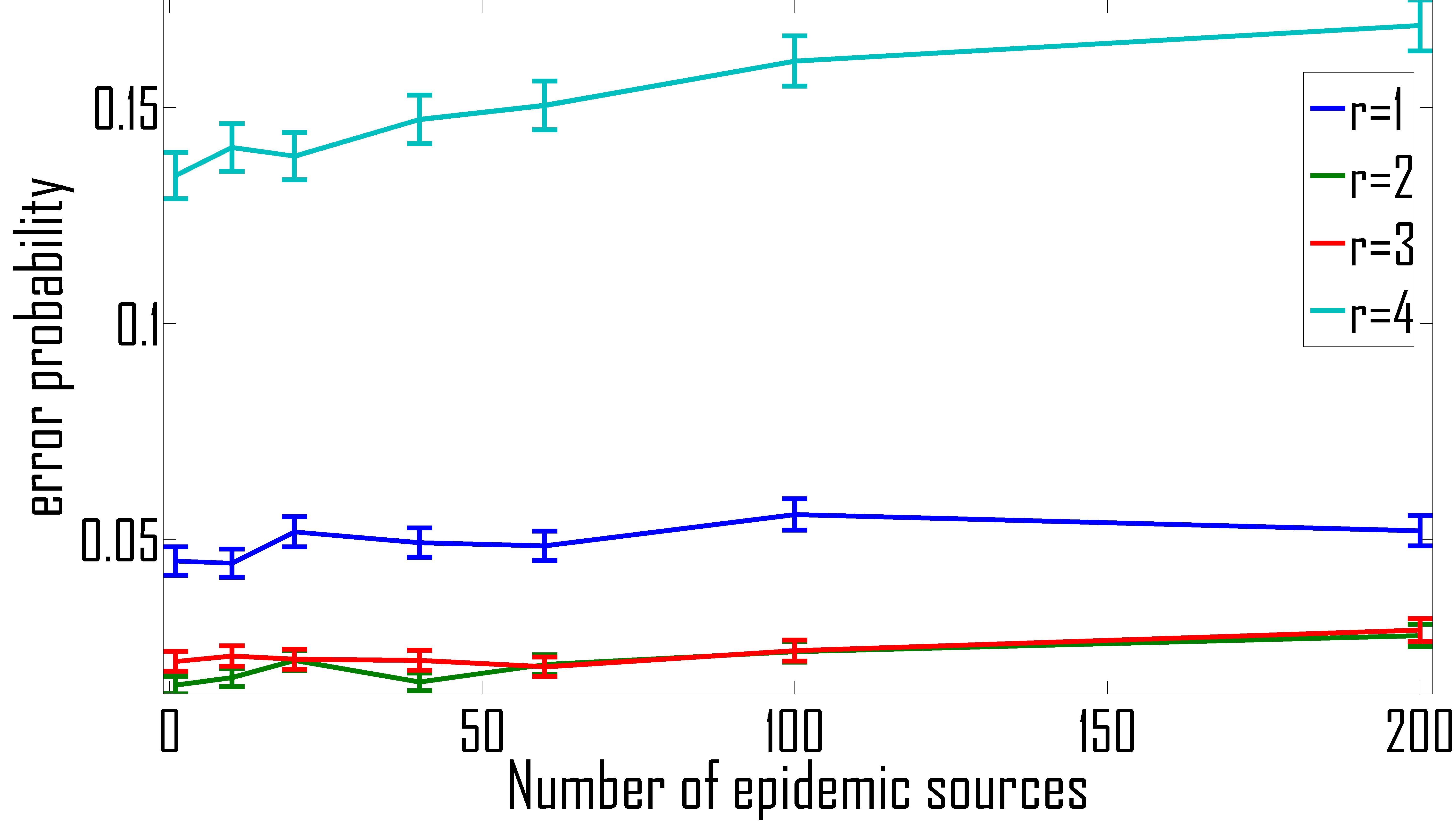}\caption{\label{fig:internet} The error probability as a function of number
of epidemic sources on the Internet AS graph. The Internet is well
known for its ultra small world property, and the mean distance between
nodes is less than four. The best results are achieved using balls
of either $r=2$ or $r=3$ radius. The number of infected nodes is
$6.7\%$ of the $27K$ ASs composing the internet. The reporting probability
is $q=0.05$ so only $0.3\%$ of the nodes in the graph are truly
reporting. Finally, for each truly reporting node there are $8$ false
positives ($f=8$). }
\end{figure}

The degree distribution of these networks closely resembles power
law. In particular, the degree distribution of most of this networks
is $Ax^{\alpha}$ with $3>\alpha>2$. We have tested our algorithm
on random power law (Fig. \ref{fig:scale free noise}). As the number
of reporting nodes increases, there are more positive classifiers
in the uniform reporting scenario. In order to compensate for this
effect, one needs to design a classifier for less frequent events.
This is done by increasing the threshold $K$, which in turn increases
the rarity of the event. Indeed, it is possible to achieve low error
rates with small ball radii even in the presence of a large number
of false positives.

\begin{figure}
\includegraphics[width=1\columnwidth]{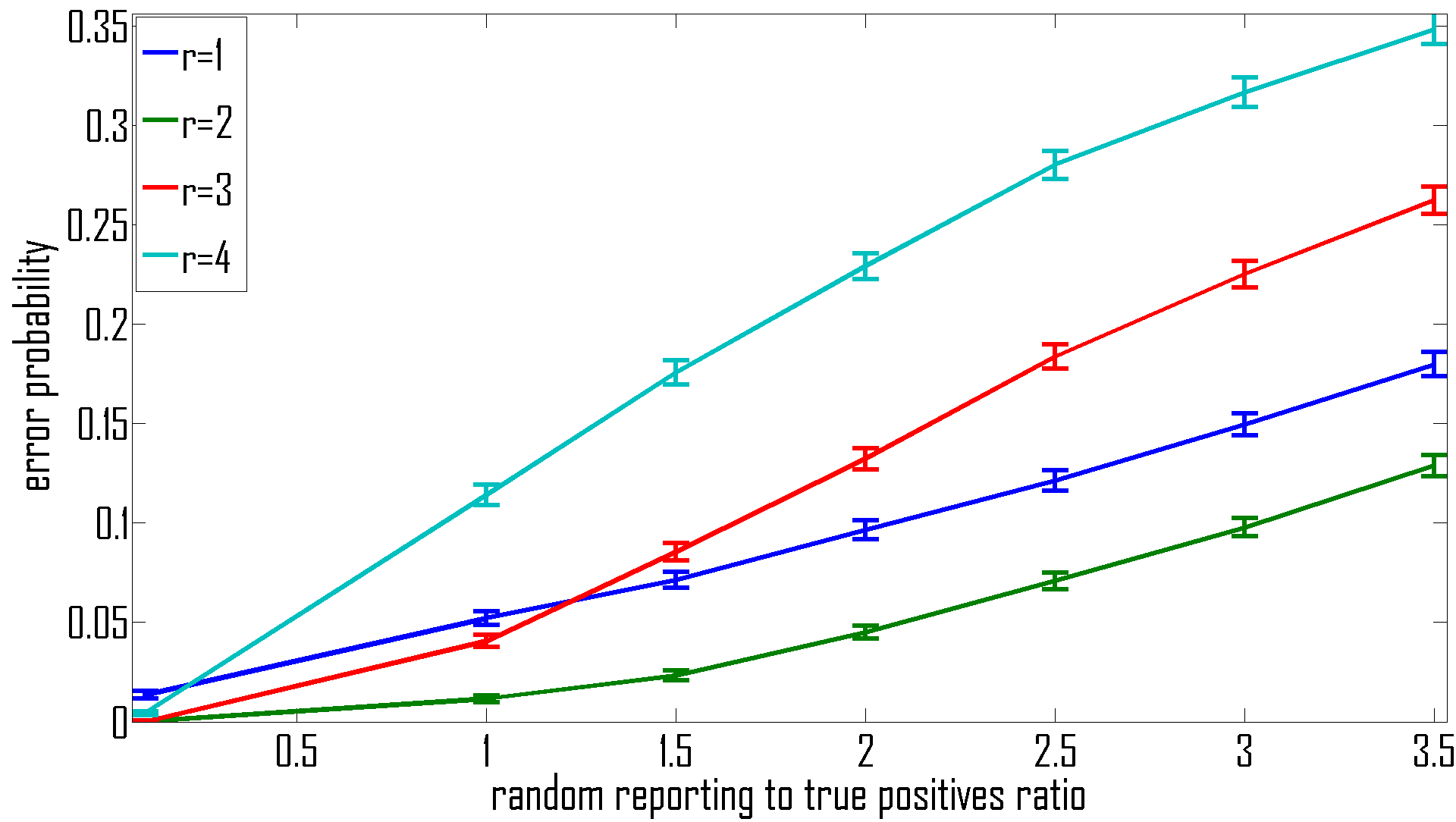}\caption{\label{fig:scale free noise}The error probability as a function of
the number of uniformly reporting nodes to truly reporting nodes $(f)$
in a scale free network. The number of false positives is approximately
$f/\left(f+1\right)$. The plot was generated using Monte Carlo simulation
on a scale free network of $N=8.4K$ nodes. There are roughly $0.4N$
infected nodes, whereas only $0.1N$ nodes report their infectious
state. The degree distributation is $\Pr\left(deg(i)=x\right)=Ax^{-2.5}.$}
\end{figure}

While it is often clearer to state the proofs by means of the border
set, it is often simpler to implement the algorithm in terms of a
local ball environment. Recall the definition of the radius-$l$ ball
ball indicator

\[
H_{i}^{B}(l,s)=\begin{cases}
1 & \left|\{v\in B_{i}(l)\cap S\}\right|\geq s\\
0 & \text{otherwise}
\end{cases}.
\]

As the number of nodes in a ball may vary, and due to finite size
effects, the optimal threshold value $T$ may be different than the
corresponding theoretical value. In practice, one may optimize this
threshold value in an independent preprocessing step. In practice,
values close to the theoretical predictions were found adequate in
most cases. 
\end{document}